\newcommand{\cbool}{\mathit{bool}\xspace}
\newcommand{\kfunc}{\mathsf{func}\xspace}
\newcommand{\kbreak}{\mathsf{break}\xspace}
\newcommand{\knew}{\mathsf{new}\xspace}
\newcommand{\ktrue}{\mathsf{true}\xspace}
\newcommand{\kfalse}{\mathsf{false}\xspace}
\newcommand{\klet}{\mathsf{let}\xspace}
\newcommand{\kif}{\mathsf{if}\xspace}
\newcommand{\letin}[3]{\klet\;\xspace #1 = #2\xspace\;\mathsf{in}\; %
                       \xspace #3\xspace}
\newcommand{\ite}[3]{\kif\;\xspace (#1)\; %
                     \mathsf{then}\;\xspace #2 \;\mathsf{else}\;\xspace #3\xspace}
\newcommand{\iha}[4]{\mathsf{ifhasattr}\; (#1,#2)\; %
                     \mathsf{then}\;\xspace #3 \;\mathsf{else}\;\xspace #4\xspace}
\newcommand{\breakle}[2]{\kbreak\;\xspace #1\;\xspace #2\xspace}
\newcommand{\dom}[1]{\mathrm{dom}(#1)\xspace}
\newcommand{\vdc}{\preceq_{\mathit{c}}}
\newcommand{\atype}[4]{[{#2};{#1}]\implies[{#3};{#4}]}
\newcommand{\matches}{\mathrel{\mathord{<}\!\raisebox{1pt}{\scriptsize $\#$}}}
\newcommand{\TConst}{\mathsf{Const}_T\xspace}
\newcommand{\TVars}{\mathcal{V}_T\xspace}
\newcommand{\Types}{\mathsf{Types}\xspace}
\newcommand{\Constr}{\mathsf{Constr}\xspace}
\newcommand{\updconstr}{\hookleftarrow}
\newcommand{\econtext}[1]{\vec{E}\left\langle#1\right\rangle}
\newcommand{\draft}[1]{\ifdraft{{{\color{red} [[[#1]]]}}}{} }
\newcommand{\draftmargin}[1]{\ifdraft{\marginpar{{\color{red} [[[#1]]]}}}{} }
\newcommand{\vivi}[1]{\ifdraft{{{\color{JungleGreen} #1}}}{}}
\let\vd=\vdash
\newtheorem{definition}{Definition}[section]
\newtheorem{lemma}[definition]{Lemma}
\newtheorem{theorem}[definition]{Theorem}
\newtheorem{proposition}[definition]{Proposition}
\def\sqr#1#2{\vbox
 {\hrule height#2
  \hbox{\vrule width#2 height#1 \kern#1 \vrule width#2}%
  \hrule height#2}}
\def\squarek{\sqr{1.5ex}{.4pt}}
\def\koniec{\mbox{}
\nolinebreak[4]{\mbox{}\hspace*{20pt}\hfill\squarek}}
\newenvironment{proof}{\vspace*{-0.8em}\noindent{\bf
   Proof:\\}}{}
\title{Lucretia\,---\,a type system for objects in languages with reflection%
  \thanks{See \url{http://www.youtube.com/watch?v=IuezNswtRfo}}}
\author[1]{Viviana Bono}
\author[2]{Marcin Benke}
\author[2]{Aleksy Schubert}
\affil[1]{Dipartimento di Informatica dell'Università di Torino\thanks{This work was partly supported by the PRIN 12008 DISCO grant.}}
\affil[2]{University of Warsaw\thanks{This work was partly supported by the Polish government grant no N N206 355836.}}
\date{May 1, 2012}
\begin{document}

\maketitle

\begin{abstract}
  Object-oriented scripting languages such as JavaScript or Python gain
  in popularity due to their flexibility. Still, the growing code bases
  written in the languages call for methods that make possible to automatically
  control the properties of the programs that ensure their stability in the
  running time. We propose a type system, called Lucretia, that makes
  possible to control the object structure of languages with reflection.
  Subject reduction and soundness of the type system with respect to the
  semantics of the language is proved.
\end{abstract}

\section{Introduction}

Scripting object-oriented languages such as JavaScript, Python, Perl or Ruby
became very popular \cite{Tiobe12} due to their succinctness of program
expressions and flexibility of idioms \cite{WrigstadNV09}.
These languages optimise the programmer time, rather than the machine time and
are very effective when small programs are constructed
\cite{Prechelt00,WrigstadEFNV09}. The advantages of the languages that help when
a short programs are developed can be detrimental in case big applications are
created. Short code that has clear advantages when small programs are
constructed \cite{Ousterhout98} can provide less information for a person that
looks for clues on what particualr piece of code is doing (and this is one of
the most frequent activities during software maintenance tasks, see e.g.
\cite{Sasso96,KoMCA06}). Moreover, the strong invariants a programmer can rely
on in case of statically typed languages are no longer valid as type of a
particular value can change with each instruction and mostly in an uncontrolled
way with each function call in the program.

Still, systems that handle complex and critical tasks such as Swedish premium
pension system \cite{Stephenson01} are deployed and continuously maintained.
Therefore, it is desirable to study methods which help programmers in
understanding their code and keeping it consistent. Therefore we propose
a type system that handles dynamic features of scripting languages and can
help in understanding of the existing code. 

\draftmargin{alx: I know how to prove the undecidability, but do we need the
proof? If so I can type it down.}
The type system we propose is presented in the style where many type annotations
must be present to guarantee correct typechecking. In particular, we require
that function declarations as well as labelled instructions are decorated with
their types. What is more, the type system is so strong that the reconstruction
of the types is (most probably) undecidable.  However, the system is designed in
such a way that it will primarily be used on top of a type inference algorithm
and admits a~wide range of type inference heuristics that are not complete, but
provide correct type annotations for wide range of programs. In addition,
we would like to encourage programmers to add type annotations to their
programs since the annotations can serve as an important documentation of
the code invariants assumed by developers in the program construction phase.

The inherent feature of the scripting languages is that a running time type of a
particular variable may change in the course of program execution. This problem
can be solved to some extent through introduction of single assignment form for
local variables. Still, this cannot be applied to object fields. The natural
semantics of programs is such that the fields change and the efforts to
circumvent this principle will most probably result in a complicated solution.
Therefore, the statement that statically describes evolution of the running time
type of a variable cannot be just a type name, but must reflect the journey of
the running time type throughout the control flow graf of the program. It would
be very inconvenient to repeat the structure of the whole control flow graf at
each variable in the program. Therefore, it makes more sense to describe the
type of each visible variable at each statically available program point.
This is the approach we follow in this paper and present a type system which is
inspired by the works on type-and-effect systems \cite{Marino09,Gifford86}.
However, we present our typing judgements in a slightly different manner, i.e.
one where the effect is described by a pair of constraints that express the
update performed by a particular instruction. In a sense the pair together with
the typed expression can be viewed as a triple in a variant of Hoare logic. The
constraints we use to express the type information for an object are
approximations of the actual types by the matching constraints similar to the
ones in \cite{BonoB99}.

\newif\ifevolution
\evolutionfalse

This paper is structured as follows. 
\ifevolution
We present an overview of object 
evolution in scripting languages in Section~\ref{sec:evolution}.
\fi
The language in which we model the scripting languages is presented in
Section~\ref{sec:semantics-and-types}. The type system is presented in
Section~\ref{sec:type-system}. Its properties are demonstrated in
Section~\ref{sec:properties-of-the-type-system}. At the end we present
related work in Section~\ref{sec:related-work} and conclude in
Section~\ref{sec:conclusions}.

\ifevolution
\section{Object evolution in scripting languages}
\label{sec:evolution}

\draft{write down using \cite{LebresneROWV09}}

\paragraph{some random examples below}

what is the definition of [x := l] and other constructs, 

\begin{verbatim}
 y := 1
 if y then
   y := "foo"
 else
   y := 12
 f(y)

Problem with translation to let-representation because there are two

This in SSA is
 y1 := 1
 if y1 then
   y2 := "foo"
 else
   y3 := 12
 y4 = \phi(y2, y3)
 f(y4)

another form of translation from real Python
 let local = new {} in
   let _ = local.y = 1 in
     let _ = if local.y then
               local.y = "foo"
             else
               local.y = 12
     in
       f(local.y)


\end{verbatim}

Note that it looks like we can avoid $\top$, at least to be able to specify partially the content of an object. Actually, in this version of the system there is no way of specifying the actual complete type of an object, which is always a type variable constrained in a $\Phi$. We should be able to write:
\begin{verbatim}
 let local = new  in
  let _ = local.c = 1 in
    let _ = local.d = local in 
     ...
\end{verbatim}
where $d$ should have the type $\texttt{d}:X$, with $X <: \{\texttt{c}:int, \texttt{d}:X\}$. \vivi{To be checked}

Can we do something like:
\begin{verbatim}
let local = new  in
  let _ = local.c = 1 in
    let _ = local.d = local in 
     let _ = local.c = local.c + 1 in
     ...   
\end{verbatim}
that is, making the $\texttt{local}$ id play the role of self? I think it should be possible, if local has type
$X$... But then we need two axioms:

$$\infer{\Gamma(x:t)\vdash x:t ; \Phi}{t \mbox{ not a type variable}}$$

$$\infer{\Gamma(x:X)\vdash x:X ; \Phi, X<:\{\overline{c:t}\}}{}$$

\fi

\section{Model language and its semantics}
\label{sec:semantics-and-types}

The syntax of expressions we work with is presented in Fig~\ref{fig:syntax}.
The figure presents not only the raw syntax of programs, but also the
syntax of evaluation contexts that represent ifnormation on the particular
point in the program the semantical reduction rules elaborate. Moreover,
we present here the full syntax of the expressions that may show up in the
evolution of the expression being reduced. This is reflected by the fact that
we permit locations to occur in the expressions while the intent is they are
not directly visible in programs in the source form.

\begin{definition}[Types]
\label{def:types}
  The type information associated with an expression of the language
  we consider here is combined of two items. One is a representation
  of the actual type and the second is the constraint expression which
  approximates the shape of the type. These components are generated
  with help of the following grammar:
$$
  \begin{array}{lcl}
    t_b    &::=&  c \mid X \mid t_{b,1}\lor t_{b,2} \mid
    \atype{\Psi_1}{t_1,\ldots,t_n}{t_{n+1}}{\Psi_2}\\
    t      &::=&  t_b \mid t_1\lor t_2 \mid  \bot\\
    t_r    &::=&  \{ \overline{l:t} \} \mid \{\}\\
    \Psi   &::=&  X\matches t_r, \Psi \mid \emptyset
  \end{array}
$$
where $c$ is a type constant from $\TConst$, $X\in\TVars$ is a type
variable.  The set $\Types$ is the set of elements generated from $t$,
the set $\Types_b$ is the set of elements generated from $t_b$, the
set $\Types_r$ is the set of elements generated from $t_r$, and the
set $\Constr$ is the set of elements generated from $\Psi$.
\end{definition}

\begin{figure}
\begin{displaymath}
\begin{array}{p{100pt}l@{~}l@{~}l}
  identifiers & x & ::= & \mbox{ \ldots(identifiers)\ldots }\\
  labels      & n      & ::= & \mbox{ \ldots(identifiers)\ldots }\\
  locations   & l \\
  constants   & c      & ::= & \ldots\\
  function value      & v_f    & ::= & \kfunc(x_1,\cdots,x_n).t\; \{ e \}\\
  values      & v      & ::= & x \mid c \mid v_f \mid l\\
  objects     & o      & ::= & \{ \} \mid \{ L_f \}\\
  fields list & L_f    & ::= & n : v \mid n : v, L\\
  function expression & e_f    & ::= & x \mid v_f \\
  expressions & e      & ::= & v \mid \letin{x}{e_1}{e_2}
                                 \mid e_f(e_1,\cdots, e_n)\\
              &        &     &   \mid op_n(e_1,\cdots, e_n)\\
              &        &     &   \mid \ite{e_1}{e_2}{e_3}\\
              &        &     &   \mid \iha{x}{a}{e_1}{e_2}\\
              &        &     &   \mid \breakle{n}{e}
                                 \mid n.t\; \{ e \}\\
              &        &     &   \mid \knew
                                 \mid x_1.x_2 = v \mid x_1.x_2
                                 \\
  basic
  evaluation
  contexts    & E      & ::= & \bullet
                                 \mid \letin{x}{E}{e} \\
              &        &     &   \mid v_f(v_1,\cdots, v_{i-1}, E, e_{i+1},\cdots,e_n)\\
              &        &     &   \mid op_n(v_1,\cdots, v_{i-1}, E,e_{i+1},\cdots,e_n)\\
              &        &     &   \mid \ite{E}{e_2}{e_3}\\
              &        &     &   \mid \breakle{n}{E}\\
              &        &     &   \mid n.t\; \{ E \}\\
  stores      & \sigma & ::= & \cdot \mid (l, o) \sigma\\
\end{array}
\end{displaymath}
\caption{Syntax of $\lambda_M$}
\label{fig:syntax}
\end{figure}

\begin{figure}
{\small
\begin{displaymath}
\begin{array}{l@{~~}l}
  (\mbox{Let-Enter}) & \sigma, \vec{E}\langle \letin{x}{e_1}{e_2}\rangle \leadsto
                         \sigma, \vec{E}\langle \letin{x}{\langle e_1\rangle}%
                                                      {e_2}\rangle\\
  (\mbox{Let})       & \sigma, \vec{E}\langle \letin{x}{\langle v\rangle}
                                                      {e}\rangle \leadsto
                         \sigma, \vec{E}\langle e[x/v]\rangle\\[1.5em]

  (\mbox{Op-Enter})     & \sigma, \vec{E}\langle op_n(e_1,\cdots, e_k)\rangle \leadsto\\
                       & \qquad
                         \sigma, \vec{E}\langle op_n(\langle e_1\rangle,\cdots, e_k)\rangle\\
  (\mbox{Op-Prop})      & \sigma, \vec{E}\langle op_n(v_1,\cdots, v_{i-1}, \langle v_i\rangle, e_{i+1},\cdots, e_k)\rangle \leadsto\\
                       & \qquad
                         \sigma, \vec{E}\langle op_n(v_1,\cdots, v_{i-1}, v_i, \langle e_{i+1}\rangle,\cdots, v_k)\rangle\\
  (\mbox{Op-Back})      & \sigma, \vec{E}\langle op_n(v_1,\cdots,\langle v_k\rangle)\rangle \leadsto\\
                       & \qquad
                         \sigma, \vec{E}\langle op_n(v_1,\cdots, v_k)\rangle\\
  (\mbox{Op-Eval})      & \sigma, \vec{E}\langle op_n(v_1,\cdots, v_k)\rangle \leadsto
                         \sigma, \vec{E}\langle \delta_n(op_n, v_1,\cdots,v_k)\rangle\\[1.5em]

  (\beta\mbox{-Enter}) & \sigma, \vec{E}\langle \kfunc(x_1,\cdots, x_n).t\; \{ e \}(e_1,\cdots, e_k)\rangle \leadsto\\
                       & \qquad
                         \sigma, \vec{E}\langle \kfunc(x_1,\cdots, x_n).t\; \{ e \}(\langle e_1\rangle,\cdots, e_k)\rangle\\
  (\beta\mbox{-Prop})  & \sigma, \vec{E}\langle \kfunc(x_1,\cdots, x_n).t\; \{ e \}(v_1,\cdots, v_{i-1}, \langle v_i\rangle, e_{i+1},\cdots,e_n)\rangle \leadsto\\
                       & \qquad
                         \sigma, \vec{E}\langle \kfunc(x_1,\cdots, x_n).t\; \{ e \}(v_1,\cdots, v_{i-1}, v_i,\langle e_{i+1}\rangle,\cdots,e_n)\rangle\\
  (\beta v)            & \sigma, \vec{E}\langle \kfunc(x_1,\cdots, x_n).t\; \{ e \}(v_1,\cdots,\langle v_k\rangle)\rangle \leadsto\\
                       & \qquad
                         \sigma, \vec{E}\langle e[x_1/v_1,\cdots,x_n/v_n]\rangle\\[1.5em]

  (\mbox{If-Enter})  & \sigma, \vec{E}\langle \ite{e_1}{e_2}{e_3}\rangle \leadsto
                         \sigma, \vec{E}\langle \ite{\langle e_1\rangle}{e_2}{e_3}\rangle\\
  (\mbox{If-True})   & \sigma, \vec{E}\langle \ite{\langle \ktrue\rangle}{e_2}{e_3}\rangle \leadsto
                         \sigma, \vec{E}\langle e_2\rangle\\
  (\mbox{If-False})   & \sigma, \vec{E}\langle \ite{\langle
  \kfalse\rangle}{e_2}{e_3}\rangle \leadsto \sigma, \vec{E}\langle e_3\rangle\\[1.5em]

  (\mbox{Ifhtr-True} )  & \sigma, \vec{E}\langle \iha{l}{a}{e_1}{e_2}\rangle \leadsto
                          \sigma, \vec{E}\langle e_1\rangle\\
                        & \quad \mbox{ when } a\in\dom{\sigma(l)}\\
  (\mbox{Ifhtr-False} ) & \sigma, \vec{E}\langle \iha{l}{a}{e_1}{e_2}\rangle \leadsto
                          \sigma, \vec{E}\langle e_2\rangle\\
                        & \quad \mbox{ when } a\not\in\dom{\sigma(l)}\\[1.5em]

  (\mbox{Brk-Enter})   & \sigma, \vec{E}\langle n.t\;\{ e \}\rangle \leadsto
                         \sigma, \vec{E}\langle n.t\;\{\langle e\rangle \}\rangle\\
  (\mbox{Brk-P})       & \sigma, \vec{E}_1\langle n.t\;\{
                         \vec{E}_2\langle\breakle{n}{v} \rangle\}\rangle\leadsto
                         \sigma, \vec{E}_1\langle v\rangle,\\
                       & \quad \mbox{ when $n$ does not occur as label in }  \vec{E}_2\\
  (\mbox{Lbl-Pop})     & \sigma, \vec{E}\langle n.t\;\{ \langle v\rangle \}\rangle \leadsto
                         \sigma, \vec{E}\langle v\rangle\\[1.5em]

  (\mbox{New})         & \sigma, \vec{E}\langle\knew\rangle \leadsto
                         (l, \emptyset)\sigma, \vec{E}\langle l\rangle \quad l \mbox{
                         fresh }\\
  (\mbox{SetRef})      & \sigma, \vec{E}\langle l.f=v\rangle \leadsto
                         \sigma [l/\sigma(l)[f/v]], \vec{E}\langle v\rangle\\
                         
  (\mbox{Deref})       & \sigma, \vec{E}\langle l.f\rangle \leadsto
                         \sigma, \vec{E}\langle \sigma(l)(f)\rangle\\
                       & \quad \mbox{ when } f\in\dom{\sigma(l)}\\

\end{array}
\end{displaymath}}

\caption{Semantic rules of $\lambda_M$}
\label{fig:semantics}
\end{figure}

The set of objects ${\cal O}$ is the set of partial functions from
identifiers to values.
The set of stores $H$ is the set of finite partial functions from the
set of locations to the set of objects. The set of evaluation contexts
${\cal EC}$ is the set of sequences of {\em basic evaluation contexts}
as presented in Figure~\ref{fig:syntax}. The set of expressions
${\cal EX}$ is the set of expressions as given in
Figure~\ref{fig:syntax}.
The semantics is defined by the relation $\leadsto$ which relates triples
$(\sigma, E, e)\in H\times{\cal EC}\times{\cal EX}$. Let $s$ be
a (partial) function we write $s[x/v]$ to denote function such that  $s[x/v](x)=v$
and $s[x/v](y)=s(y)$ for $y\not=x$ and $y\in\dom{s}$.

When an evaluation context is a sequence $E_1,E_2,\ldots,E_n$, we write it as
\begin{displaymath}
  E_1\langle E_2\langle\ldots E_n\langle\bullet\rangle\ldots\rangle\rangle.
\end{displaymath}
When the relation holds between triples $(\sigma, E, e)$,
$(\sigma', E', e')$ we write the relation as
$\sigma, E\langle e\rangle\leadsto \sigma', E'\langle e'\rangle$. The precise
semantical rules are given in Figure~\ref{fig:semantics}

We do not model inheritance and multiple inheritance of the scripting languages.
We assume that this can be viewed as a notational shortcut for direct
presentation of objects and therefore we cannot model all
features of the class models (e.g. the method updates in classes in Python).

\paragraph{Constructs of the language}

The language we propose has object-oriented features such as object creation
($\knew$), field reference (the dot notation $x_1.x_2$),  and field
modification ($x_1.x_2 = v$). In addition we have object introspetion operation
available through $\iha{\cdot}{\cdot}{\cdot}{\cdot}$ construct. The flow
of information is controlled by $\letin{x}{e_1}{e_2}$ expression 
that in addition to creation of a local variable $x$ makes possible to 
execute $e_1$ and $e_2$ in sequence. The traditional split of control flow
depending on a computed condition is realised by $\ite{\cdot}{\cdot}{\cdot}$.
The exceptional flow can be realised by our labelled instructions combined
with $\kbreak$ statements. At last loops can be organised by recursively
defined functions.

Notably we use types in labelled instructions and in function declarations.
These types can in fact be omitted since they do not play part in the respective
reduction rules. The possible errors that may result by execution of an
operation on a value that the operation is not prepared to take are present
in our system through appropriate definition of $\delta_n$ that
can check the types of the values supplied in the arguments.

\section{The type system}
\label{sec:type-system}

Typing judgements are of the form $\Psi_1;\Gamma;\Sigma\vd e:t_b;\Psi_2$ where
$\Gamma$ is an environment (i.e. mapping from variables to $t_b$),
$\Sigma$ is a mapping from locations to type variables $\TVars$,
$\Psi_i$ for $i=1,2$ is a set of constraints of the form $X\matches t$.

The intended meaning of such a judgment is: evaluating $e$ in the environment
$\Gamma$ with types of locations that match the mapping $\Sigma$ and with the
store satisfying $\Psi_1$ leads to a value of type $t_b$ and store satisfying
$\Psi_2$.
\par\medskip
\paragraph{Record update}

The information about the record being updated is represented by a constraint
on a variable $X$. The rule for record update comes in two variants, depending
on whether the constraint mentions the field being updated.

\newcommand{\uoldrule}{\mbox{\small\rm (update-old)}\xspace}
\newcommand{\ufreshrule}{\mbox{\small\rm (update-fresh)}\xspace}
\begin{displaymath}
  \infer[\uoldrule]%
    {\Psi_1;\Gamma;\Sigma    \vd    x.a=e
       :  t_2;      \Psi_2,  X\matches\{a:t_2,\overline{l:s}\}
    }
    {
      \begin{array}{c}
        \Psi_1; \Gamma;\Sigma   \vd   x  
          :  X;      \Psi_1
      \\
        \Psi_1; \Gamma;\Sigma   \vd   e
          :  t_2;    \Psi_2,  X\matches \{a:t'_2,\overline{l:s}\}\\
      \end{array}
    }
\end{displaymath}
If (we know that) the record contains the field $a$ that is to be updated,
we forget the old value, hence its type ($t'_2$) is ignored; the store after
the update reflects the new value of~$a$.
\begin{displaymath}
  \infer[\ufreshrule]%
    {\Psi_1;\Gamma;\Sigma    \vd     x.a=e
       : t_2;      \Psi_2,  X\matches\{a:t_2,\overline{l:s}\}}
    {
      \begin{array}{c}
        \Psi_1;\Gamma;\Sigma   \vd   x
           : X;\Psi_1\\
        \Psi_1;\Gamma;\Sigma   \vd   e
           : t_2;\Psi_2,X\matches \{\overline{l:s}\}
      \qquad
        a\not\in\overline{l}
      \end{array}
    }
\end{displaymath}
otherwise the constraint for $X$ is amended with  information about the
new value of $a$.

\paragraph{Record access}

\newcommand{\raccrule}{\mbox{\small\rm (access)}\xspace}
\newcommand{\raccmrule}{\mbox{\small\rm (access-m)}\xspace}
We can access a field $a$ provided the constraint on the record guarantees that
it has the field $a$ and it is of a definite type (it cannot be $\bot$).
\begin{displaymath}
  \infer[\raccrule]%
    {
      \Psi;\Gamma,x:X;\Sigma   \vd   x.a : u;\Psi
    }
    {
      \vd u
    &
      \Psi\ni X\matches\{a:u,\overline{l:t}\}
    }
\end{displaymath}

A type is definite ($\vd u$) if it is a type constant, type variable, function
type or a disjunction of definite types. This is defined inductively as
follows:

\begin{displaymath}
\begin{array}{c@{~~~\qquad~~}c}
  \infer{\vd c}{}
&
  \infer{\vd X}{}
\\[2ex]
  \infer{\vd t_1\lor t_2}
        {\vd t_1  &   \vd t_2}
&
  \infer{\vd \atype{\Psi_1}{t_1,\ldots,t_n}{u}{\Psi_2}}
        {}
\end{array}
\end{displaymath}

\paragraph{Conditional instruction}

\newcommand{\condrule}{\mbox{\small\rm (if)}\xspace}
The basic way to split the processing depending on some value is to use
a conditional instruction. This instruction is typed in our system in the
following way.

\begin{displaymath}
  \infer[\condrule]%
  {
    \Psi;\Gamma;\Sigma   \vd   \ite{e_1}{e_2}{e_3}
      : u;  \Psi_2 
  }%
  {
    {\Psi;\Gamma;\Sigma   \vd   e_1
      : \cbool;\Psi_1}
  &
    \Psi_1;\Gamma;\Sigma \vd   e_2
      : u;\Psi_2
  &
    \Psi_1;\Gamma;\Sigma \vd   e_3
      : u;\Psi_2
  }
\end{displaymath}

The typing of the branches need to be typable into the same type.
However, the constraints in each of them can be in principle
different. Therefore we need a rule to weaken the constraint so
that they result in the same set. For this we need an operation that
merges type constraint from two branches. This is defined in the following
way.
\begin{itemize}
\item
$
  \Psi,X\matches\{\overline{l:t}\}\uplus \Psi',X\matches\{\overline{l':t'}\} =
  \Psi\uplus\Psi',X\matches\{\overline{l:t}\}\uplus\{\overline{l':t'}\}
$
\item
$
  \Psi\uplus \emptyset =
  \Psi
$
\item
$
  \emptyset\uplus \Psi =
  \Psi
$
\item
$
  \{l_0:t_0,\overline{l:t}\}\uplus\{l_0:t_0',\overline{l':t'}\} =
  \{l_0:t_0\lor t_0',\overline{l'':t''}\}
$ where $\{\overline{l'':t''}\} = \{\overline{l:t}\}\uplus\{\overline{l':t'}\}$
\item
$
  \{\}\uplus\{\overline{l':t'}\} =
  \{\overline{l':t'\lor\bot}\}
$
\item
$
  \{\overline{l':t'}\}\uplus\{\} =
  \{\overline{l':t'\lor\bot}\}
$
\end{itemize}

\newcommand{\weakenruleb}{\mbox{\small\rm (weaken-$\uplus$)}\xspace}
This combination of two sets of constraints need to be incorporated into
type derivation and this can be done using the following weakening rule.
\begin{displaymath}
  \infer[\weakenruleb]%
  {
    \Psi;\Gamma;\Sigma\vd e:t;\Psi_1\uplus\Psi_2
  }{%
    \Psi;\Gamma;\Sigma\vd e:t;\Psi_1
  }
\end{displaymath}

\draft{Mr. Oniszczuk gave two examples of ifs with x := new before if and
inside both branches of if the latter infers stronger type information since
it operates on two different type variable names }

\draft{ the old version
\begin{displaymath}
  \infer%
  {
    \Psi;\Gamma;\Sigma   \vd   \ite{e_1}{e_2}{e_3}
      : u;   \Psi_2\uplus\Psi'_2
  }%
  {
    \Psi;\Gamma;\Sigma   \vd   e_1
      : \cbool;\Psi_1
  &
    \Psi_1;\Gamma;\Sigma \vd   e_2
      : u;\Psi_2
  &
    \Psi_1;\Gamma;\Sigma \vd   e_3
      : u;\Psi'_2
  }
\end{displaymath}
}

\paragraph{Object structure introspection}

\newcommand{\ifhatrule}{\mbox{\small\rm (ifhasattr)}\xspace}
Dynamic languages split the processing not only depending on some condition
defined in terms of the actual values, but also depending on types of
expressions. Therefore, we introduce the construct that performs appropriate
check and provide a typing rule that handles it.

\begin{displaymath}
  \infer[\ifhatrule]%
    {
      \Psi;\Gamma,x:X;\Sigma   \vd   \iha{x}{a}{e_1}{e_2}
         : t;\Psi_2\uplus\Psi'_2
    }
    {
      \Psi[X\gets \{a\}];\Gamma,x:X;\Sigma   \vd   e_1
         : t;\Psi_2
    &
      \Psi;\Gamma,x:X;\Sigma   \vd   e_2
         : t;\Psi_2'
    }
\end{displaymath}

The typing rule must update the typing information available in the branches
of the instruction. Type information in one branch takes as granted that
the attribute $a$ is present. We need not pass the information that
the attribute $a$ is missing in the other branch since the case that the
actual value does not have the field $a$ must be expressed in the type
description by an alternative with $\bot$. 
\begin{itemize}
  \item $(\Psi,X\matches t)[X\gets \{a\}] = \Psi,X\matches (t[X\gets \{a\}])$,
  \item $\{ a:t_1\lor\cdots\lor t_n, \overline{l : u} \}%
     [X\gets\{a\}] =
     \{ a:t_{i_1}\lor\cdots\lor t_{i_k},\overline{l : u} \}$ where
     $\{t_{i_1},\ldots,t_{i_k}\} = \{ t_i \mid t_i\not=\bot \mbox{ and } t_i
     \in \{t_1,\ldots,t_n\}\}
     $.
\end{itemize}

We assume that base types are implemented so that they have a distinguished
field which makes possible to check for base type with $\iha{?}{?}{?}{?}$.


\paragraph{Variable and location access}

\newcommand{\vaccessrule}{\mbox{\small\rm (v-access)}\xspace}
\newcommand{\laccessrule}{\mbox{\small\rm (l-access)}\xspace}
The information about a type of a variable is recorded in the type environment
so we use it when a variable is referred in an expression.

\begin{displaymath}
  \infer[\vaccessrule]%
    {
      \Psi;\Gamma,x:t;\Sigma   \vd   x
        : t;\Psi
    }
    {}
\end{displaymath}

Similarly, the information about locations is stored in the location environment
$\Sigma$ and we exploit it in an analogous way. 

\begin{displaymath}
  \infer[\laccessrule]%
    {
      \Psi;\Gamma;\Sigma,l:t   \vd   l
        : t;\Psi
    }
    {}
\end{displaymath}

\paragraph{Function definition and call}

\newcommand{\fdeclrule}{\mbox{\small\rm (fdecl)}\xspace}
Whenever we want to type a function definition we must rely on the type
annotation that is associated with the function. We have to add the
type decoration to functions since they can be used in a recursive way.
The inference of a type in such circumstances must rely on some kind of
a fixpoint computation which is a difficult task. We assume here
that the type is given in the typechecking procedure either by hand
with help of programmer or by some kind of automatic type inference
algorithm.

The type we obtain for a function definition is just the type that
is explicitly given in the function. However, we have to check
in addition that the body of the function indeed obeys the declared type.
Therefore, we assume that the formal parameters have the types expressed
in the precondition of the function type and obey the constraints
noted in the constraint set from the same precondition. We expect then
that the resulting type is equal to the one in the postcondition of
the type and the constraints match the constraints in the postcondition.
It is worth noting that the constraints in the precondition and postcondition
should take into account not only the formal parameters, but also the
global variables that are visible in the function body. 

Notably, we type the function body so that the typing does not interfere with
the typing of the function definition. This reflects the dynamic character
of the language. In particular the running time type of a global variable may
differ at the function definition point from the expected one so that the
typing is impossible at that particular moment. However, the situation may
be different at the execution site. Therefore, we defer the check of the
compatibility of global variables to the function call site.

\begin{displaymath}
  \infer[\fdeclrule]%
    {
      \Psi;\Gamma;\Sigma   \vd   \kfunc(x_1,\cdots,x_n).t\; \{ e \}
        : t;\Psi
    }
    {
      \begin{array}{c}
      t \equiv \atype{\Psi_1}{t_1,\ldots, t_n}{u}{\Psi_2}
    \\
      \Psi_1;\Gamma,x_1:t_1,\ldots,x_n:t_n;\Sigma    \vd    e
        : u;\Psi_2
      \end{array}
    }
\end{displaymath}

In case a real application of a function expression is made to arguments
we have to check that the application indeed does not lead to type error in
this case. Therefore, the typing rule for application must check that
the type information for global variables at the call site (i.e. after
all actual parameters of the call are elaborated) is in accordance with the type
information available in the functional type. This is checked with
the relation $\vdc$.

\begin{definition}[$\vdc$]
The rules that operate on constraint sets are as follows. 
\begin{displaymath}
\begin{array}{c@{~~\qquad~~}c}
  \infer{\Psi\vdc\emptyset}{}
&
  \infer%
    {
      \Psi,X\matches t\vdc\Psi',X\matches t'
    }%
    {
       \Psi\vdc\Psi'
    &
       t\vdc t'
    }
\end{array}
\end{displaymath}

They reduce the problem to the one on record types where the relation is
defined as follows.

\begin{displaymath}
\begin{array}{c@{~\quad~}c@{~\quad~}c}
  \infer{t\vdc t}%
        {}
&
  \infer{\{\overline{l:t}\}\vdc\{\}}%
        {}
&
  \infer{\{a:u,\overline{l:t}\}\vdc\{\overline{l:t}\}}%
        {}
\\[2ex]
  \multicolumn{3}{c}{
  \infer{\{a:u,\overline{l:t}\}\vdc\{a:u',\overline{l:t}\}}
        { u \vdc u' }
  }
\\[2ex]
  u\vdc u
&
  u_1\vdc u_1\lor u_2
&
  u_2\vdc u_1\lor u_2
\\[2ex]
  \multicolumn{3}{c}{
  \displaystyle
  \infer{u_1\lor u_2\vdc u'_2\lor u'_1}%
        { u_1\vdc u'_1 &  u_2\vdc u'_2 }
   }
\\
\end{array}
\end{displaymath}

\draft{Maybe also add the following rule:
$$
  \infer{\{a:u,\overline{l:t}\}\vdc\{a:u\lor u',\,\overline{l:t'}\}}%
        {\{\overline{l:t}\}\vdc\{\overline{l:t'}\}}
$$
}
\end{definition}

\begin{proposition}
\label{thm:vdc-transitive}
  The relation $\vdc$ is reflexive and transitive.
\end{proposition}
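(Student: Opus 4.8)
The plan is to prove both properties by induction, exploiting the stratification of the relation: the constraint-set rules are driven by the record-type rules through the premise $t\vdc t'$, and the record-type rules are in turn driven by the type-level rules through the premise $u\vdc u'$. So I would establish the two properties bottom-up, first for types, then for record types, then for constraint sets, reusing the lower levels as lemmas. Reflexivity is the easy half: at the type and record levels it is immediate from the explicit axioms $u\vdc u$ and $t\vdc t$, while at the constraint-set level I would argue by induction on $\Psi$, taking the empty case as an instance of $\Psi\vdc\emptyset$ and combining, for $\Psi,X\matches t$, the induction hypothesis $\Psi\vdc\Psi$ with $t\vdc t$ through the cons rule.

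Transitivity is the substantial part, and I would prove it by induction on the sum of the sizes of the two derivations $D_1:A\vdc B$ and $D_2:B\vdc C$, with a case analysis on the last rule of each. For constraint sets the argument is clean: inspecting $D_2$, either $C=\emptyset$, in which case $A\vdc\emptyset$ holds directly, or $D_2$ ends in the cons rule peeling a pair $X\matches t''$; inversion on $D_1$ then exposes a matching pair $X\matches t$, and I close the case using the induction hypothesis on the shorter tails together with type-level transitivity for $t\vdc t''$. The feature that makes this level go through is that the base rule $\Psi\vdc\emptyset$ discards the whole unmatched remainder at once, so two successive reductions recombine into a single derivation.

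The real obstacles are at the lower two levels. First, the disjunction-introduction rule concludes $u_1\lor u_2\vdc u_2'\lor u_1'$ with its summands transposed, so the composite of two such steps again lands on a transposed disjunction relative to how the target $C$ is written; matching them literally forces one to treat $\lor$ up to associativity and commutativity. I would state this convention explicitly, after which the $\lor$-cases close by applying type-level transitivity componentwise and then re-assembling with the congruence rule. Second, and more seriously, the width rules for records modify a single field per step: $\{a:u,\overline{l:t}\}\vdc\{\overline{l:t}\}$ drops exactly one field, and the only wholesale rule $\{\overline{l:t}\}\vdc\{\}$ collapses to the empty record, so the composite of two field-drops of distinct fields is not the conclusion of any single literal rule. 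Here one cannot glue the two derivations directly.

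To overcome this I would prove a characterization lemma stating that $r_1\vdc r_2$ holds exactly when $\dom{r_2}\subseteq\dom{r_1}$ and $r_1(l)\vdc r_2(l)$ for every $l\in\dom{r_2}$, reading the width/depth rules in their recursive, structural form — matching the pattern already used successfully at the constraint-set level and anticipated by the alternative recursive rule noted in the draft. The componentwise relation on the right-hand side of the characterization is patently reflexive and transitive once the type level is, so both properties transfer back to $\vdc$ on records, and from there to the constraint sets. Reconciling the one-field-at-a-time presentation with this structural reading — equivalently, showing that the generalized simultaneous drop-and-refine rule is admissible — is the step I expect to be the crux of the whole argument.
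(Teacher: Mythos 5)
Your plan is architecturally sensible, but there is nothing in the paper to compare it against: Proposition~\ref{thm:vdc-transitive} is stated there with no proof at all. Judged on its own merits, your treatment of reflexivity and your reduction of constraint-set transitivity to record-level transitivity (inversion on the cons rule, then the induction hypothesis plus type-level transitivity) are correct, and you have located the genuine difficulty exactly where it lies. The problem is that the step you defer as ``the crux'' --- admissibility of the generalized simultaneous drop-and-refine rule, i.e.\ the right-to-left direction of your characterization lemma --- is not merely hard: it is false for the rules as literally written, and with it the proposition itself fails. Every record-level rule is an axiom or has only type-level premises, so a record judgment $r_1\vdc r_2$ is derivable only if it matches the conclusion pattern of a single rule. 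Now take $A=\{a:c_1,b:c_2,d:c_3\}$, $B=\{b:c_2,d:c_3\}$, $C=\{d:c_3\}$ with $c_i$ type constants. Both $A\vdc B$ and $B\vdc C$ are instances of the one-field drop axiom, but $A\vdc C$ matches nothing: the drop axiom removes exactly one field, the depth rule preserves the domain, the empty-record axiom needs target $\{\}$, and reflexivity needs $A=C$. The failure lifts to constraint sets via $X\matches A$, $X\matches B$, $X\matches C$, since a nonempty target forces the cons rule, whose premise would have to be $A\vdc C$. A parallel failure occurs at the type level and is \emph{not} cured by your AC convention: $\{a:c_1\}\vdc\{a:c_1\}\lor c_2$ and $\{a:c_1\}\lor c_2\vdc\{\}\lor c_2$ are derivable, yet $\{a:c_1\}\vdc\{\}\lor c_2$ is not, because the embedding axiom $u_1\vdc u_1\lor u_2$ demands that the source occur syntactically as a disjunct of the target.

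So the correct outcome of your analysis is not a proof but a refutation plus a repair: the structural characterization you propose ($\dom{r_2}\subseteq\dom{r_1}$ and $r_1(l)\vdc r_2(l)$ pointwise, with a corresponding clause for disjunctions) is evidently the relation the authors intend, and if one \emph{redefines} $\vdc$ that way, or closes the given rules under cut, then reflexivity and transitivity go through exactly by your bottom-up induction. As a lemma about the given rules, however, your characterization holds only left-to-right, and the proposal's deferred step would fail. You came within one observation of discovering this: having correctly noted that two successive field-drops have no single-rule counterpart, the next move was to ask whether \emph{any} derivation could conclude $A\vdc C$; the rule format (record-level conclusions never have record-level premises) makes the negative answer, and hence the counterexample to transitivity, immediate.
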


\begin{proposition}
\label{thm:uplus-vdc}
  For any constraint sets $\Psi_1,\Psi_2$ we have $\Psi_1\vdc\Psi_1\uplus\Psi_2$.
\end{proposition}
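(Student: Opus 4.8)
The plan is to prove the statement by induction following the definition of $\uplus$, reducing the claim about constraint sets to an analogous claim about the record types attached to each variable. At the level of constraint sets I would induct on $\Psi_1$ (equivalently, on the number of recursive unfoldings of $\uplus$). In the base case $\Psi_2=\emptyset$ we have $\Psi_1\uplus\emptyset=\Psi_1$, so the goal $\Psi_1\vdc\Psi_1$ is immediate from reflexivity of $\vdc$ (Proposition~\ref{thm:vdc-transitive}). In the step case, with $\Psi_1=\Psi_1',X\matches r_1$ and $\Psi_2=\Psi_2',X\matches r_2$, the definition of $\uplus$ gives $\Psi_1\uplus\Psi_2=(\Psi_1'\uplus\Psi_2'),\,X\matches(r_1\uplus r_2)$, so it suffices to apply the second rule defining $\vdc$ on constraint sets with the two premises $\Psi_1'\vdc\Psi_1'\uplus\Psi_2'$ (the induction hypothesis) and $r_1\vdc r_1\uplus r_2$ (the record-level claim below).

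The heart of the argument is therefore the record-level statement: for all record types $r_1,r_2$ we have $r_1\vdc r_1\uplus r_2$. I would again argue by induction on the definition of $\uplus$ on records. When $r_1$ and $r_2$ share a leading field, so that $\{l_0:t_0,\overline{l:t}\}\uplus\{l_0:t_0',\overline{l':t'}\}=\{l_0:t_0\lor t_0',\overline{l'':t''}\}$, I would combine the field-weakening rule for $\vdc$ with the type-level inequality $t_0\vdc t_0\lor t_0'$ (an instance of $u_1\vdc u_1\lor u_2$) and the induction hypothesis applied to the residual records $\{\overline{l:t}\}$ and $\{\overline{l':t'}\}$; transitivity of $\vdc$ (Proposition~\ref{thm:vdc-transitive}) then stitches the single-field weakening and the recursive step together. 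A small auxiliary fact used here and below is $\{\overline{l:t}\}\vdc\{\overline{l:t\lor\bot}\}$, obtained by weakening one field at a time via $t_i\vdc t_i\lor\bot$ (again an instance of $u_1\vdc u_1\lor u_2$) and chaining with transitivity.

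The main obstacle is the base case $\{\}\uplus\{\overline{l':t'}\}=\{\overline{l':t'\lor\bot}\}$ (and its mirror), where the merged record carries fields $\overline{l'}$ that do not literally occur in $r_1=\{\}$. The explicit rules for $\vdc$ on records only let the left record drop a field it has or weaken the type of such a field; they never introduce a field on the right out of nothing, and the same gap reappears one level up for variables occurring in $\Psi_2$ but not in $\Psi_1$. To close this case I would appeal to the intended reading, made explicit in the discussion accompanying the \ifhatrule rule, that an absent field is identified with a field of type $\bot$: under this convention $\{\}$ is treated as carrying each $l_i'$ with type $\bot$, so that $\{\}\vdc\{\overline{l':t'\lor\bot}\}$ reduces, field by field, to $\bot\vdc t_i'\lor\bot$ (an instance of $u_2\vdc u_1\lor u_2$) composed through field-weakening and transitivity. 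I expect that pinning down this identification, rather than the inductive bookkeeping, is where the real care is needed, and it is the step I would write out in the most detail.
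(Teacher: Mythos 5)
The paper states this proposition \emph{without any proof}, so there is nothing to compare your attempt against; it has to be judged directly against the definitions of $\uplus$ and $\vdc$, and there your attempt contains a genuine gap --- precisely at the point you flag yourself. The convention you invoke to close the hard case (reading an absent field as a field of type $\bot$, following the discussion around the ifhasattr rule) is not part of the inductive definition of $\vdc$. The record-level rules only allow reflexivity, dropping a field present on the left, and weakening the type of a field present on both sides; none of them, nor transitivity, can produce a field on the right that is absent on the left. An easy induction on derivations shows that $r\vdc r'$ between record types forces $\dom{r'}\subseteq\dom{r}$, so $\{\}\vdc\{\}\uplus\{a:t\}=\{a:t\lor\bot\}$ is simply not derivable; the same happens one level up, where $\emptyset\vdc\emptyset\uplus(X\matches\{\})=X\matches\{\}$ fails because the constraint-set rules never introduce on the right a variable missing on the left. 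So under the rules as written the proposition is false whenever $\Psi_2$ constrains a variable that $\Psi_1$ does not, or gives a common variable a field that the corresponding $\Psi_1$-record lacks. No appeal to ``intended reading'' can substitute for a missing rule such as $\{\overline{l:t}\}\vdc\{a:u\lor\bot,\overline{l:t}\}$ together with a constraint-set analogue. Your diagnosis of the obstacle is exactly right, but the proof cannot be completed without amending the definition of $\vdc$ or restricting the proposition (e.g.\ to the case where every variable and field mentioned in $\Psi_2$ already occurs in $\Psi_1$); this is a defect of the paper's claim rather than of your strategy, but as a proof of the stated proposition the attempt does not go through.

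A second, smaller issue: in the shared-field case you say transitivity ``stitches'' the single-field weakening $\{l_0:t_0,\overline{l:t}\}\vdc\{l_0:t_0\lor t_0',\overline{l:t}\}$ to the induction hypothesis. But the induction hypothesis gives $\{\overline{l:t}\}\vdc\{\overline{l:t}\}\uplus\{\overline{l':t'}\}$, a relation between records \emph{without} the field $l_0$, so the two facts do not share an endpoint and transitivity alone cannot combine them. What you need is a congruence lemma: if $\{\overline{l:t}\}\vdc\{\overline{l'':t''}\}$ then $\{l_0:u,\overline{l:t}\}\vdc\{l_0:u,\overline{l'':t''}\}$. This is provable by a separate induction on the derivation (each record rule keeps the untouched fields fixed, so the extra field can be carried along, using transitivity to chain single-field steps), but it is an additional lemma that should be stated and proved, not an instance of Proposition~\ref{thm:vdc-transitive}.
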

\par\medskip

Execution of a function body causes changes of values held in global variables.
Such changes may give rise to changes in the types. This must by reflected
and the type information must be updated accordingly. Therefore, we need
an operation of such an update.

\begin{definition}[type information update $\updconstr$]
  \label{df:type-information-update}
  For two sets of constraints $\Psi_1, \Psi_2$ we define inductively the set
  of constraints $\Psi_1\updconstr\Psi_2$, in words
  {\em constraints $\Psi_1$ updated with $\Psi_2$}, as follows. 
  \begin{itemize}
    \item $\Psi\updconstr \emptyset = \Psi$,
    \item $\Psi,X\matches t\updconstr \Psi',X\matches t' =
           (\Psi\updconstr \Psi'),X\matches t'$,
    \item $\Psi\updconstr \Psi',X\matches t' =
           (\Psi\updconstr \Psi'),X\matches t'$
      when $X\not\in \dom{\Psi}$.
  \end{itemize}
\end{definition}

It is important to observe that the update preserves at least the
type information that is present in the constraint with which we
make the update.

\begin{lemma}
\label{thm:vdc-updconstr}
For any constraint sets $\Psi_1,\Psi_2$,
   $$\Psi_2\vdc\Psi_1\updconstr\Psi_2$$
\end{lemma}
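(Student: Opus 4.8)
The plan is to argue by induction on the structure of the right-hand constraint set $\Psi_2$, mirroring the three clauses through which Definition~\ref{df:type-information-update} defines $\updconstr$ by recursion on its second argument. The only external facts I expect to use are the reflexivity of $\vdc$, on types as well as on constraint sets, supplied by Proposition~\ref{thm:vdc-transitive}, together with the two formation rules for $\vdc$ on constraint sets; transitivity and Proposition~\ref{thm:uplus-vdc} should not be needed for this route. The guiding observation is that, whichever clause of $\updconstr$ fires, the rightmost constraint of $\Psi_1\updconstr\Psi_2$ is copied verbatim from $\Psi_2$, so the set-formation rule for $\vdc$ will line the two sides up on the same variable $X$ carrying the same bound $t'$.

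Concretely, in the inductive step write $\Psi_2=\Psi_2',X\matches t'$. If $X\in\dom{\Psi_1}$, say $\Psi_1=\Psi_1',X\matches t$, then the second clause gives $\Psi_1\updconstr\Psi_2=(\Psi_1'\updconstr\Psi_2'),X\matches t'$; if $X\notin\dom{\Psi_1}$, the third clause gives $\Psi_1\updconstr\Psi_2=(\Psi_1\updconstr\Psi_2'),X\matches t'$. Writing $\Phi$ for $\Psi_1'\updconstr\Psi_2'$ in the first case and for $\Psi_1\updconstr\Psi_2'$ in the second, both cases are discharged by the single instance
\begin{displaymath}
  \infer{\Psi_2',X\matches t'\vdc \Phi,X\matches t'}{\Psi_2'\vdc\Phi & t'\vdc t'}
\end{displaymath}
of the set-formation rule: the right premise $t'\vdc t'$ is reflexivity, and the left premise $\Psi_2'\vdc\Phi$ is exactly the induction hypothesis at the strictly smaller right argument $\Psi_2'$. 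Thus the inductive step closes uniformly for both clauses, and carries no real difficulty.

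The base case, $\Psi_2=\emptyset$, is the step I expect to be the main obstacle. Here $\Psi_1\updconstr\emptyset=\Psi_1$, so the goal collapses to $\emptyset\vdc\Psi_1$; but the formation rules only ever derive judgements of the shape $\Psi\vdc\emptyset$, so $\emptyset\vdc\Psi_1$ is derivable precisely when $\Psi_1=\emptyset$. The resolution is to thread through the induction the invariant $\dom{\Psi_1}\subseteq\dom{\Psi_2}$, which is the regime in which $\updconstr$ is actually used (both constraint sets describe the same store). A short check shows that the second and third clauses above both preserve this invariant for their recursive calls, and at $\Psi_2=\emptyset$ it forces $\Psi_1=\emptyset$, after which $\emptyset\vdc\emptyset$ follows from the axiom. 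With this bookkeeping in place the induction goes through; pinning down exactly this domain condition is the one point a careful write-up must address, since without it the statement fails for $\Psi_2=\emptyset$ and a non-trivial $\Psi_1$.
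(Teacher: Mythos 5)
The paper itself states this lemma without proof, so there is no official argument to compare yours against line by line; judged on its own terms, your attempt is half right and half wrong. The inductive step is fine, and your diagnosis of the base case is sharp and correct: with the paper's two formation rules for $\vdc$ on constraint sets, a judgement $\Psi\vdc\Psi'$ is derivable only when every variable constrained in $\Psi'$ is also constrained in $\Psi$, so $\emptyset\vdc\Psi_1$ fails for nonempty $\Psi_1$, and the lemma as literally stated is not derivable. (The same defect afflicts Proposition~\ref{thm:uplus-vdc}.)

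However, your repair --- threading the invariant $\dom{\Psi_1}\subseteq\dom{\Psi_2}$ through the induction --- rescues the proof only by emptying the lemma of content. Under that hypothesis every constraint of $\Psi_1$ is overwritten by one of $\Psi_2$, so an easy induction on Definition~\ref{df:type-information-update} gives $\Psi_1\updconstr\Psi_2=\Psi_2$ outright, and your conclusion is bare reflexivity of $\vdc$. Worse, your claim that this is ``the regime in which $\updconstr$ is actually used'' is exactly backwards. In the subject-reduction proof (case of application) the lemma is invoked as $\Psi_2'\vdc\Psi_{2}\updconstr\Psi_2'$, i.e.\ with your $\Psi_1$ instantiated to the caller's constraint set $\Psi_{2}$ and your $\Psi_2$ to the callee's declared postcondition $\Psi_2'$; there the caller typically constrains many variables (all objects allocated so far) that the postcondition never mentions, which is precisely when $\updconstr$ does nontrivial work by carrying the caller's old constraints across the call. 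Your restricted lemma is silent in exactly that situation, so it cannot support the use the paper makes of it. The honest conclusion from your (correct) base-case observation is not that a domain side condition should be quietly added, but that the lemma, in the generality in which it is stated and used, is not derivable from the given rules for $\vdc$ on constraint sets; those rules would have to be extended (for instance, to let the right-hand set absorb the leftover constraints inherited from $\Psi_1$) before an induction of the kind you set up can close.
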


\newcommand{\fapprule}{\mbox{\small\rm (fapp)}\xspace}
We can now present the rule for function application. This rule elaborates
first all the expressions in the actual parameters and updates the
constraints so that their side effects are taken into account and then
weakens the result to match the constraints for the input part of the
function type. Then the resulting set of constraint is just the set of
constraints after the last argument updated with the constraint information
contained in the result type of the function. 
\begin{displaymath}
  \infer[\fapprule]%
    {
      \Psi_1;\Gamma;\Sigma   \vd   e_f(N_1,\ldots,N_n)
        : u; \Psi_{n+1}\updconstr\Psi_2'
    }%
    {
      \begin{array}{c}
        \Psi_{i};\Gamma;\Sigma  \vd   N_i
          : t_i; \Psi_{i+1} \mbox{ for } i=1,\ldots,n
      \qquad
        \Psi_{n+1}\vdc\Psi_1'
      \\
        \Psi_1;\Gamma;\Sigma    \vd   e_f
          : \atype{\Psi_1'}{t_1,\ldots, t_n}{u}{\Psi'_2};\Psi_1
      \end{array}
    }
\end{displaymath}

\paragraph{Control flow with break instructions}

\newcommand{\labelrule}{\mbox{\small\rm (label)}\xspace}
\newcommand{\breakrule}{\mbox{\small\rm (break)}\xspace}
The language we propose includes non-local control flow instructions that
may be used to simulate exceptions or jumps out of loops (recursive calls
in our case). The labelled instruction resembles a call to an anonymous
function with no parameters. We only have to remember type information
that we want to achieve as the result of the execution. Note that
the prospective $\kbreak$ instructions can reside inside of
a recursive function call. Therefore, we cannot assume that the type we
give here explicitely is a result of a finitary collecting of a non-local
typing information from all the break statements inside $e$. 

\begin{displaymath}
  \infer[\labelrule]
    {
      \Psi;\Gamma;\Sigma   \vd   n.\atype{\emptyset}{}{t}{\Psi_2} \{ e \}
        : t;\Psi_2
    }
    {
      \Psi;\Gamma,n:\atype{\emptyset}{}{t}{\Psi_2};\Sigma   \vd   e
        : t;\Psi'_2
    && 
       \Psi'_2\vdc \Psi_2
    }
\end{displaymath}

In case we encounter $\kbreak$ instruction we have to check that
the expression the value of which it given as the result indeed obeys the
expected result type of the labelled instruction surrounding the break.

\begin{displaymath}
  \infer[\breakrule]
    {
      \Psi;\Gamma,n:\atype{\emptyset}{}{t}{\Psi_2};\Sigma   \vd  \breakle{n}{e}
        : t';\Psi'
    }
    {
      \Psi;\Gamma,n:\atype{\emptyset}{}{t}{\Psi_2};\Sigma   \vd  e
        : t;\Psi'_2
    &&
      \Psi'_2\vdc \Psi_2
    }
\end{displaymath}

\paragraph{Let-expression}

\newcommand{\letrule}{\mbox{\small\rm (let)}\xspace}
The $\klet$ expression is our language statement that does the instruction
sequencing operation. Therefore, we have to compute the constraints for
the first step, then for the second one and at last combine them together as
in the logical cut rule that forgets the middle formula.

\begin{displaymath}
  \infer[\letrule]%
    {
      \Psi_1;\Gamma;\Sigma   \vd    \letin{x}{e_1}{e_0}
         : t;\Psi_3
    }
    {
      \Psi_1;\Gamma;\Sigma   \vd    e_1
         : t_1;\Psi_2
    &&
      \Psi_2;\Gamma,x:t_1;\Sigma   \vd   e_0
         : t;\Psi_3
    }
\end{displaymath}

\paragraph{Object creation}

\newcommand{\newrule}{\mbox{\small\rm (new)}\xspace}
The object creation rule should just introduce
information that a new object is created. This is expressed by the fact
the set of constraints is extended with information on an object that
has no known to the type system fields.

\begin{displaymath}
  \infer[\newrule]%
    {
       \Psi;\Gamma;\Sigma   \vd  \knew
         : X;X\matches\{\},\Psi
     }
     {
        X\ \mbox{fresh}
     }
\end{displaymath}

\paragraph{Disjunction}

\newcommand{\weakenruleo}{\mbox{\small\rm (weaken-$\lor$)}\xspace}
We also add a rule to weaken the assumptions about
an expression. This is primarily necessary to obtain the subject reduction
property. In case an $\kif$ instruction is reduced to one of its branches
the type system infers only information from one of the branches and 
the information from the other one must be added artificially.
\begin{displaymath}
 \infer[\weakenruleo]%
    {
       \Psi_1;\Gamma;\Sigma   \vd  e: t_1\lor t_2;\Psi_2
     }
    {
       \Psi_1;\Gamma;\Sigma   \vd  e: t_i;\Psi_2 & i=1 \mbox{ or } 2
    }
\end{displaymath}

\section{Properties of the type system}
\label{sec:properties-of-the-type-system}

Soundness and subject reduction need to take the store into account. Hence we need to express the fact that a store instance satisfies certain constraints:

$$\sigma;\Sigma \models \Psi$$

where $\sigma$ is a store, 
$\Sigma$ a mapping between locations and types, and $\Psi$ a set of constraints of the form $X\matches t$.

\begin{definition}[$\models$]
\label{def:models}
If $c$ is a constant of type $t_c$:
$$\sigma;\Sigma\models c : t_c$$

For function values:

$$\infer[(\models \mathit{fun})]
{\sigma;\Sigma\models func(\vec{x}).t\{e\}:t}
{\Psi;\emptyset;\sigma\vd func(\vec{x}).t\{e\}:t;\Psi
&\sigma;\Sigma\models\Psi
}
$$

$$\sigma;\Sigma\models\emptyset$$

If $l$ is a location:
$$\infer[(\models obj)]{
  \sigma[l/\{\overline{a:v},\overline{y:s}\}];\Sigma(l:X)
  \models
  X\matches \{\overline{a:t}\},\Psi
  }
{\sigma;\Sigma\models \Psi 
& \sigma,\Sigma\models \overline{v:t}}
$$
in particular, for every object $o$:
$$\sigma(l/o);\Sigma(l:X)\models X\matches\{\} $$

For any $\Psi,\Psi'$
$$\infer[(\models \uplus)]{\sigma;\Sigma\models\Psi\uplus\Psi'}{\sigma;\Sigma\models\Psi}$$
\end{definition}

For any $t_1,t_2$,
$$ \infer[(\models\lor)]{\sigma;\Sigma\models v: t_1 \lor t_2}{\sigma;\Sigma\models v: t_i} $$

\begin{lemma}[soundness of $\vdc$]
\label{thm:vdc-sound}  
For any $\sigma,\Sigma$ and constraint sets $\Psi_1,\Psi_2$, 
if $\sigma,\Sigma\models \Psi_1$ and $\Psi_1\vdc\Psi_2$ then $\sigma,\Sigma\models \Psi_2$
\end{lemma}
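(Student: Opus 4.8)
The plan is to prove the statement by induction on the derivation of $\Psi_1\vdc\Psi_2$, but since $\vdc$ is really three relations bundled under one symbol --- on constraint sets ($\Psi\vdc\Psi'$), on record types ($t_r\vdc t_r'$), and on general types ($u\vdc u'$) --- and since $\models$ correspondingly relates stores both to constraint sets ($\sigma;\Sigma\models\Psi$) and to value typings ($\sigma;\Sigma\models v:t$, through $(\models obj)$, $(\models\lor)$ and the constant/function axioms), I would first isolate two auxiliary claims and then prove the lemma on top of them. The two claims are: (a) value-level weakening, if $\sigma;\Sigma\models v:u$ and $u\vdc u'$ then $\sigma;\Sigma\models v:u'$; and (b) record-level weakening, for every $\Psi$, if $\sigma;\Sigma\models (X\matches t),\Psi$ and $t\vdc t'$ (with $t,t'$ record types) then $\sigma;\Sigma\models (X\matches t'),\Psi$.

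Claim (a) I prove by induction on $u\vdc u'$. The reflexivity axiom $u\vdc u$ is immediate; the two injection axioms $u_1\vdc u_1\lor u_2$ and $u_2\vdc u_1\lor u_2$ are exactly the premise of $(\models\lor)$; and for $u_1\lor u_2\vdc u'_2\lor u'_1$ I invert $(\models\lor)$ to obtain $\sigma;\Sigma\models v:u_i$ for some $i$, apply the induction hypothesis to the matching premise $u_i\vdc u'_i$, and re-apply $(\models\lor)$.

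Claim (b) I prove by induction on $t\vdc t'$ using the record rules. Reflexivity is trivial. For $\{\overline{l:t}\}\vdc\{\}$ the object trivially satisfies the empty record $X\matches\{\}$ while the remaining constraints $\Psi$ are satisfied by hypothesis. For field-forgetting $\{a:u,\overline{l:t}\}\vdc\{\overline{l:t}\}$ I invert $(\models obj)$ on the hypothesis to expose the stored object $\{a:v_a,\overline{l:v},\overline{y:s}\}$, and then re-apply $(\models obj)$ after reclassifying the field $a$ among the ``extra'' fields $\overline{y:s}$, which leaves precisely the premises already in hand. For depth-weakening $\{a:u,\overline{l:t}\}\vdc\{a:u',\overline{l:t}\}$ from $u\vdc u'$, inversion gives $\sigma;\Sigma\models v_a:u$; claim (a) upgrades this to $\sigma;\Sigma\models v_a:u'$, and re-applying $(\models obj)$ yields the weakened constraint. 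The lemma itself then follows by induction on the constraint-set derivation: the base case $\Psi_1\vdc\emptyset$ is the axiom $\sigma;\Sigma\models\emptyset$, and in the step $\Psi_1=(X\matches t),\hat\Psi\vdc(X\matches t'),\hat\Psi'=\Psi_2$ I weaken the head constraint via (b) and the tail $\hat\Psi\vdc\hat\Psi'$ via the induction hypothesis.

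The main obstacle is the interface between the inductive definition of $\models$ and the pointwise manipulations that the $\vdc$ rules perform on individual fields and individual constraints: the rules $(\models obj)$ and $(\models\uplus)$ build constraint-set satisfaction in an entangled, order-sensitive manner (note in particular that the $\vdc$ rule acts on the last constraint of the set whereas $(\models obj)$ introduces the first). Before the three inductions go through cleanly I therefore expect to need an inversion/decomposition lemma establishing that $\sigma;\Sigma\models\Psi$ is equivalent to the independent satisfaction of each constraint $X\matches t$ in $\Psi$ (and, within a record, of each field independently). Granting that characterization, every case above is routine; without it the forgetting, reordering, and per-constraint weakening steps cannot be justified.
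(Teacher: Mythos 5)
Your induction is organized around the derivation of $\Psi_1\vdc\Psi_2$, stratified into the type level (claim (a)), the record level (claim (b)), and the constraint-set level; the paper instead reduces the lemma to the per-constraint statement (if $\sigma;\Sigma\models X\matches o_1$ and $o_1\vdc o_2$ then $\sigma;\Sigma\models X\matches o_2$) and then inducts on the derivation of $\models$ itself, case-splitting on its last rule. This difference is not cosmetic: it decides whether the inversion steps you rely on are available. When the induction is on $\models$, inversion of $(\models obj)$ is free exactly in the case where the last rule \emph{is} $(\models obj)$, and the troublesome case where the last rule is $(\models\uplus)$ is handled as a separate case of that same induction: there the given satisfaction is of a merge $o_1\uplus o'$, obtained from a strictly smaller derivation for $o_1$, and since $o_1\vdc o_1\uplus o'\vdc o_2$ (Proposition~\ref{thm:uplus-vdc} together with transitivity, Proposition~\ref{thm:vdc-transitive}), the induction hypothesis applied to the sub-derivation closes the case without ever inverting $(\models obj)$. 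To your credit, your claims (a) and (b) cover the officially listed $\vdc$ rules more carefully than the paper does (the paper treats as nontrivial only a record-disjunction rule that appears merely in a draft remark, and never spells out the depth rule, which needs precisely your claim (a)); but the load-bearing step of your plan is the one you defer.

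That deferred step is where the gap is: the decomposition lemma you ask for is false as stated. Satisfaction of a record constraint is \emph{not} equivalent to independent satisfaction of each field, precisely because of $(\models\uplus)$. For instance, $\sigma;\Sigma\models X\matches\{b:\cbool\lor\bot\}$ holds for a store with $\Sigma(l)=X$ and $\sigma(l)=\{\}$: one derives $\sigma;\Sigma\models X\matches\{\}$ and applies $(\models\uplus)$, noting $\{\}\uplus\{b:\cbool\}=\{b:\cbool\lor\bot\}$. Yet the object has no field $b$ at all, so there is no value $v$ with $\sigma;\Sigma\models v:\cbool\lor\bot$ witnessing ``independent satisfaction'' of that field. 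Consequently your inversion of $(\models obj)$ in claim (b) genuinely fails whenever the satisfaction derivation ends in $(\models\uplus)$; the presence of $\bot$-disjuncts is exactly the mechanism by which merged constraints describe possibly-absent fields. The repair is either to switch to the paper's induction on $\models$, or to replace your equivalence by a weaker, true factorization: every derivation of $\sigma;\Sigma\models\Psi$ contains a $(\models\uplus)$-free derivation of some $\Psi_0$ with $\Psi=\Psi_0\uplus\Psi'$; then Propositions~\ref{thm:uplus-vdc} and~\ref{thm:vdc-transitive} give $\Psi_0\vdc\Psi_2$, and your claims (a) and (b) can be run on the $\uplus$-free core, where inversion of $(\models obj)$ is legitimate.
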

\begin{proof}
It is sufficient to prove that if $\sigma,\Sigma\models X\matches o_1$ and $o_1\vdc o_2$ then $\sigma,\Sigma\models X\matches o_2$.
We proceed first by induction on the derivation of $\models$.
  The only nontrivial rules are 
  \begin{itemize}
  \item[($\models$ obj)] --- again, the only (marginally) nontrivial case is when  $o_1\vdc o_2$ was derived via the disjunction rule:
$$
  \infer{\{a:u,\overline{l:t}\}\vdc\{a:u\lor u',\,\overline{l:t'}\}}%
        {\{\overline{l:t}\}\vdc\{\overline{l:t'}\}}
$$
In this case we can derive $\sigma,\Sigma\models X\matches o_2$ by
inserting ($\models\lor$) before the ``$a:u$'' premise of the final
($\models$ obj) rule in the original derivation.

  \item[($\models\uplus$)] --- we have $\sigma,\Sigma\models X\matches o_1$, $o_2 = o_1 \uplus o'$, $o_2 \vdc o_3$ and need to prove  $\sigma,\Sigma\models X\matches o_3$, which can be done by induction on the definition of $o_2 \vdc o_3$.
  \end{itemize}
\end{proof}
\begin{lemma}
\label{thm:hasfield}
If   $$\sigma;\Sigma(l:X)\models X\matches\{a:t;\overline{f:s}\},\Psi'$$
 and $\vd t$ (i.e. $t$ contains no $\bot$), then 
 \begin{enumerate}[(i)]
 \item $\sigma(l)(a)$ is defined
 \item $\sigma;\Sigma\models\sigma(l)(a):t$
 \item $\sigma;\Sigma\models\Psi'$
 \end{enumerate}
\par\smallskip
\end{lemma}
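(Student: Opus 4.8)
The plan is to establish the three conclusions simultaneously by induction on the derivation of the judgement $\sigma;\Sigma(l:X)\models X\matches\{a:t,\overline{f:s}\},\Psi'$. Since its left-hand side is a constraint set carrying a nonempty record constraint on $X$, the only rules of Definition~\ref{def:models} that can be the last one applied are ($\models$ obj) and ($\models\uplus$): the remaining clauses either conclude $\sigma;\Sigma\models\emptyset$ or derive value judgements of the form $v:t$, neither of which matches our conclusion.

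In the ($\models$ obj) case the object stored at $l$ is literally some $\{\overline{a:v},\overline{y:s}\}$ whose field names include all fields of the matched record; in particular $a$ occurs among them, which is exactly (i). The premise $\sigma,\Sigma\models\overline{v:t}$ contains the judgement $\sigma;\Sigma\models\sigma(l)(a):t$ for the value bound to $a$, giving (ii), and the premise $\sigma;\Sigma\models\Psi$ is precisely $\sigma;\Sigma\models\Psi'$, giving (iii). Here definiteness of $t$ is not even needed, although it is consistent with the premises since no value satisfies $v:\bot$.

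The main obstacle is the ($\models\uplus$) case, in which the judgement was obtained as $\sigma;\Sigma\models\Psi_A\uplus\Psi_B$ from the single premise $\sigma;\Sigma\models\Psi_A$, with $\Psi_A\uplus\Psi_B$ equal to $X\matches\{a:t,\overline{f:s}\},\Psi'$. The key step is to trace how the entry $a:t$ arises in the merged record for $X$. By the definition of $\uplus$ on records, a field occurring on only one side of the merge is assigned a type of the shape $t'\lor\bot$; hence the hypothesis $\vd t$ (no occurrence of $\bot$) forces $a$ to occur with a definite type $t_A$ in $X$'s record within the satisfied component $\Psi_A$, and $t$ is then either $t_A$ itself or a disjunction $t_A\lor t_B$ whose definiteness, by the rule for $\vd$ on disjunctions, again secures $\vd t_A$. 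In either case $\Psi_A$ has the form $X\matches\{a:t_A,\ldots\},\Psi'_A$ with $\vd t_A$, so the induction hypothesis applied to the strictly smaller derivation of $\sigma;\Sigma\models\Psi_A$ yields that $\sigma(l)(a)$ is defined (this is (i)) and that $\sigma;\Sigma\models\sigma(l)(a):t_A$.

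It remains to assemble (ii) and (iii). For (ii), when $t=t_A\lor t_B$ I would apply rule ($\models\lor$) to promote $\sigma;\Sigma\models\sigma(l)(a):t_A$ to $\sigma;\Sigma\models\sigma(l)(a):t$, while for $t=t_A$ there is nothing to do. For (iii), the induction hypothesis also gives $\sigma;\Sigma\models\Psi'_A$; since the remainder $\Psi'$ of the merged set is itself $\Psi'_A$ merged with the corresponding remainder of $\Psi_B$, one further application of ($\models\uplus$) delivers $\sigma;\Sigma\models\Psi'$. The only delicate bookkeeping throughout is verifying that the single-sided merge cases---the ones that would contaminate the type of $a$ with $\bot$---are exactly those excluded by $\vd t$, which is precisely why the definiteness hypothesis is indispensable.
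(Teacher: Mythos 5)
Your overall strategy is the same as the paper's: induction on the derivation of $\models$, with ($\models$ obj) immediate and ($\models\uplus$) the only interesting case, closed off by ($\models\lor$) for (ii) and a further application of ($\models\uplus$) for (iii). (On that last point you are in fact more explicit than the paper, which stops at $\sigma;\Sigma\models\Psi'_i$ and leaves the final merge step implicit.) The problem is your key step in the ($\models\uplus$) case. You argue that $\vd t$ forces $a$ to occur with a definite type $t_A$ in the \emph{satisfied} component $\Psi_A$, on the grounds that a field present on only one side of a merge acquires a type of the shape $t'\lor\bot$. But that $\bot$-contamination happens only in the \emph{record-level} merge, i.e.\ when both $\Psi_A$ and $\Psi_B$ contain a constraint for $X$; the set-level clauses of $\uplus$ pass a constraint $X\matches\{a:t,\overline{f:s}\}$ occurring in just one of the two sets into the result \emph{unchanged}, with no $\bot$ inserted. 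Consequently the case $X\in\dom{\Psi_B}$, $X\notin\dom{\Psi_A}$ is not excluded by your argument: there the merged set carries $a:t$ with $\vd t$ intact, yet $\Psi_A$ --- the only component your induction hypothesis can reach --- says nothing about $X$ at all, so your intermediate claim is false and the appeal to the induction hypothesis breaks down.

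The paper's own proof makes this situation visible by an explicit three-way case split on where $X$'s constraint originates: from $\Psi_1$ only (a), from $\Psi_2$ only (b), or from both (c). Your reasoning correctly covers case (c) (field in both records, so $t=t_A\lor t_B$ with $\vd t_A$, $\vd t_B$) and implicitly case (a) (where $t=t_A$), but case (b) is the one you silently ruled out by attributing to $\uplus$ a property it does not have. The paper disposes of (a) and (b) by applying the induction hypothesis to whichever component actually carries the constraint, which amounts to reading ($\models\uplus$) with that component as its premise; one may debate whether that reading is faithful to the rule as stated, but at a minimum your proof must acknowledge this case and handle it explicitly rather than derive its impossibility.
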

\par\medskip
\begin{proof}
  By induction on the derivation of $\sigma;\Sigma(l:X)\models X\matches\{a:t;\overline{f:s}\}$.

If the last rule is $(\models obj)$ then the thesis follows immediately. On the other hand, if the last rule is $(\models\uplus)$, we have
$$X\matches\{a:t,\overline{f:u}\},\Psi' = \Psi_1\uplus\Psi_2$$
$$\infer[(\models \uplus)]{\sigma;\Sigma\models\Psi_1\uplus\Psi_2}{\sigma;\Sigma\models\Psi_i}$$
By the definition of $\uplus$, one of the following holds:
\begin{enumerate}[(a)]
\item  $\Psi_1=X\matches\{a:t,\overline{f:u}\},\Psi'_1,\;X\not\in dom(\Psi_2)$
\item  $\Psi_2=X\matches\{a:t,\overline{f:u}\},\Psi'_2,\;X\not\in dom(\Psi_1)$ 
\item $\Psi_1=X\matches\{\overline{l:s}\},\Psi'_1,\;\Psi_2=X\matches\{\overline{l':s'}\},\Psi'_2$
\end{enumerate}
If (a) or (b) holds, the thesis follows directly from the induction hypothesis. 

In the (c) case, by the assumption that $\vd t$, we have $a\in l,a\in l'$ as well as $t=t_1\lor t_2$ with $\vd t_1$ and $\vd t_2$.

Moreover, either
$$\sigma;\Sigma\models X\matches \{a:t_1,\overline{f:u}\},\Psi'_1$$
or
$$\sigma;\Sigma\models X\matches \{a:t_2,\overline{f':u'}\},\Psi'_1$$
Then by the induction hypothesis $\sigma(l)(a)$ is defined, and
$$ \sigma;\Sigma\models\Psi'_i$$
$$ \sigma;\Sigma\models\sigma(l)(a):t_i$$
for appropriate $i$. Thus by the rule $(\models\lor)$ we have
$$ \sigma;\Sigma\models\sigma(l)(a):t_1\lor t_2$$
which completes the proof. $\Box$
\end{proof}

\begin{lemma}[value completeness]
\label{thm:value-completeness}
 For any $\sigma,\Sigma,\Gamma,\Psi$ and value $v$ if $\sigma,\Sigma\models\Psi$ and $\sigma,\Sigma\models v:t$ then $\Psi;\Gamma;\Sigma\vd v:t;\Psi$
\end{lemma}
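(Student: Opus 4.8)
The plan is to induct on the derivation of $\sigma,\Sigma\models v:t$, doing a case analysis on the last rule applied; since $v$ is a value, this rule also determines the shape of $v$. Before the cases, I would record one structural observation that makes the bookkeeping trivial: the target judgement $\Psi;\Gamma;\Sigma\vd v:t;\Psi$ carries the \emph{same} constraint set on the left and right, and every value-typing rule I will need (\vaccessrule, \laccessrule, \fdeclrule, \weakenruleo, and the base axiom for constants) is constraint-preserving, so the required form is automatically respected throughout. I would also note that the variable case $v=x$ is vacuous: $\models$ has no rule deriving $\sigma,\Sigma\models x:t$ for a bare variable, since such variables are substituted away before evaluation.

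For the base cases: if the last rule is the constant rule, then $v=c$ and $t=t_c$, and the constant axiom gives $\Psi;\Gamma;\Sigma\vd c:t_c;\Psi$ at once. If $v$ is a location $l$, then $\models$ can only assign it the type variable $X$ recorded for $l$ in $\Sigma$, i.e. $l:X\in\Sigma$, and \laccessrule\ yields $\Psi;\Gamma;\Sigma\vd l:X;\Psi$ directly. In both cases the hypotheses on $\Gamma$ and $\Psi$ are used only in that $\sigma,\Sigma\models\Psi$ guarantees these judgements live over a consistent store, but the axioms themselves are independent of $\Gamma$.

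The inductive cases are $(\models\lor)$ and $(\models\mathit{fun})$. For $(\models\lor)$, we have $t=t_1\lor t_2$ with premise $\sigma,\Sigma\models v:t_i$ for some $i$; the induction hypothesis gives $\Psi;\Gamma;\Sigma\vd v:t_i;\Psi$, and a single use of \weakenruleo\ produces $\Psi;\Gamma;\Sigma\vd v:t_1\lor t_2;\Psi$. The function case is the substantive one. Inverting $(\models\mathit{fun})$ yields $\Psi';\emptyset;\sigma\vd v_f:t;\Psi'$ for some $\Psi'$, and inverting \fdeclrule\ on this derivation exposes the arrow shape of $t$ together with a derivation of the body $e$ under the annotation's precondition constraints and the parameter bindings $x_1:t_1,\ldots,x_n:t_n$. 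To re-apply \fdeclrule\ and conclude $\Psi;\Gamma;\Sigma\vd v_f:t;\Psi$, I need this body derivation to hold in the environment $(\Gamma,\Sigma)$, whereas inversion only delivers it in $(\emptyset,\sigma)$.

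The main obstacle is therefore exactly this change of environment in the function case. I expect to discharge it with an auxiliary weakening lemma asserting that typing is monotone under extension of the variable environment $\Gamma$ and of the location environment $\Sigma$, proved by a routine induction on typing derivations (the only rules inspecting these environments are \vaccessrule\ and \laccessrule, which merely look up entries that persist under extension). Because the conclusion of \fdeclrule\ constrains neither the ambient $\Gamma$ nor $\Psi$, and the body derivation refers only to the parameter bindings and the annotation's constraints, weakening the parameter-only environment up to $(\Gamma,\Sigma)$ suffices to rebuild the derivation in the required environment. The last detail to verify is that the location environment appearing in $(\models\mathit{fun})$ (written $\sigma$ there) is contained in the given $\Sigma$, so that the weakening is applicable; this follows from the discipline by which $\Sigma$ is maintained as the store grows.
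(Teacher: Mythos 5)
Your proposal is correct and takes essentially the same approach as the paper's own proof: induction on the derivation of $\sigma,\Sigma\models v:t$, with the base cases discharged by the value axioms, the disjunction case by the induction hypothesis followed by the (weaken-$\lor$) rule, and the function case by inverting and re-applying (fdecl). The environment-weakening step you isolate in the function case is precisely the paper's Lemma~\ref{thm:weakening}; the paper's proof silently elides it by declaring the constant and function cases to follow ``directly,'' so your version is the same argument carried out more carefully.
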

\begin{proof}
  By induction on the derivation of $\sigma,\Sigma\models v:t$. For the constant and function rules, the thesis follows directly.

 On the other hand if  the last rule used was 
$$ \infer[(\models\lor)]{\sigma;\Sigma\models v: t_1 \lor t_2}{\sigma;\Sigma\models v: t_i} $$
then by the induction hypothesis $\Psi;\Gamma;\Sigma\vd v:t_i;\Psi$, hence  $\Psi;\Gamma;\Sigma\vd v:t_1\lor t_2;\Psi$.
\end{proof}

\begin{lemma}[Progress]
\label{thm:progress}
If $\Psi_1;\Gamma;\Sigma\vd e:t;\Psi_2$ and 
$\sigma;\Sigma\models e:t; \Psi_1$
then either $e$ is a value, or 
$$\sigma,e \leadsto\sigma',e' $$
\end{lemma}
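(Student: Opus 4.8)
The plan is to proceed by induction on the derivation of the typing judgement $\Psi_1;\Gamma;\Sigma\vd e:t;\Psi_2$, casing on the last typing rule applied. Since $e$ is closed (the hypothesis $\sigma;\Sigma\models e:t;\Psi_1$ with an empty environment, as used in the $(\models\mathit{fun})$ rule, suggests we work with closed expressions, or with expressions whose free variables are bound to values), I would first dispose of the cases where $e$ is already a value, namely the rules \vaccessrule, \laccessrule, \fdeclrule, and \newrule\ together with the constant case. For \newrule, observe that $\knew$ is not a value; but it always reduces by the $(\mbox{New})$ rule, producing a fresh location, so progress holds trivially there. In all genuine value cases there is nothing to prove.

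For the compound expressions I would split each case according to whether the relevant subexpressions are already values. Consider \letrule: if $e_1$ is not a value, the induction hypothesis applied to the premise $\Psi_1;\Gamma;\Sigma\vd e_1:t_1;\Psi_2$ (after checking that $\sigma;\Sigma\models e_1$ holds, which I extract from $\sigma;\Sigma\models e$) gives a reduction of $e_1$, and I lift it through the evaluation context $\letin{x}{E}{e_0}$ using $(\mbox{Let-Enter})$; if $e_1$ is already a value $v$, then $(\mbox{Let})$ fires and substitutes $v$ for $x$. The rules for \condrule, for operators (\mbox{Op-*}), for function application \fapprule, and for the label construct \labelrule\ all follow the same template: either some argument/subexpression still reduces (use the induction hypothesis and the corresponding $\mathsf{Enter}$/$\mathsf{Prop}$ context rule), or all subexpressions are values and the corresponding redex rule applies. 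The \weakenruleo\ and \weakenruleb\ weakening rules, and the $(\models\uplus)$-style adjustments, are handled by a direct appeal to the induction hypothesis on the single premise, since they do not change $e$.

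The cases that carry the real content, and the main obstacle, are those where a redex needs a structural precondition guaranteed by the store-satisfaction hypothesis rather than merely by the values being in normal form. These are \raccrule\ (field access $x.a$), the update rules \uoldrule/\ufreshrule\ ($x.a=e$), and \ifhatrule\ ($\iha{x}{a}{e_1}{e_2}$). Here the point is that the semantic rules $(\mbox{Deref})$ and $(\mbox{SetRef})$ require $x$ to have evaluated to a location $l$ and, crucially for $(\mbox{Deref})$, require $a\in\dom{\sigma(l)}$. The key tool is Lemma~\ref{thm:hasfield}: from the typing of $x.a$ we know $\Psi\ni X\matches\{a:u,\overline{l:t}\}$ with $\vd u$, and from $\sigma;\Sigma\models\Psi$ together with the fact that $x$ denotes a location $l$ with $\Sigma(l)=X$, the lemma yields precisely that $\sigma(l)(a)$ is defined, which is the side condition $(\mbox{Deref})$ demands. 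For \ifhatrule\ neither branch-reduction rule has a blocking side condition (both $(\mbox{Ifhtr-True})$ and $(\mbox{Ifhtr-False})$ apply depending on membership in $\dom{\sigma(l)}$, and these two cases are exhaustive), so progress is immediate once $x$ is shown to be a location. I expect the delicate bookkeeping to lie in bridging the syntactic judgement $\Psi;\Gamma,x:X;\Sigma\vd x.a$ and the semantic requirement that $x$ has actually been replaced by a location $l$ with $\Sigma\models$ linking $X$ to $l$; this is where I would need an inversion/canonical-forms argument on $\models$ for location-typed variables, and where any gap in the statement of $\models$ (which is only partially specified in the excerpt) would have to be filled. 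Apart from that, the proof is a routine context-lifting induction.
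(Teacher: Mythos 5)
Your plan is correct and takes essentially the same route as the paper: the paper's own proof is a one-sentence observation that the only expressions which may get stuck are record access (missing field) and function application (arity mismatch), with the record-access case discharged, exactly as you propose, by Lemma~\ref{thm:hasfield}. The inductive context-lifting scaffolding you describe, and the variable-versus-location bookkeeping you flag as delicate, are precisely what the paper leaves implicit, so your plan is if anything more detailed than the published argument.
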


\begin{proof}
  The only expressions which may get stuck are record access (in case of lack of a field) and function application (in case of arity mismatch). Reducibility of record access follows from the Lemma~\ref{thm:hasfield}.
\end{proof}

\begin{lemma}[value stability]
\label{thm:stability}
If $v$ is a value and  
$$\Psi_1;\Sigma \vdash v:t;\Psi_2 $$

then  $\Psi_1\vdc \Psi_2$.

Moreover, for any $\Psi$ we have $\Psi;\Sigma \vdash v:t;\Psi $
\end{lemma}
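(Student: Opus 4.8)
The plan is to induct on the structure of the typing derivation of $\Psi_1;\Sigma\vdash v:t;\Psi_2$, establishing both assertions simultaneously. Since $v$ is a value it is a variable, a constant, a function value, or a location, so the last rule applied must be one of the four value axioms (\vaccessrule, the constant axiom, \fdeclrule, \laccessrule) or one of the two weakening rules (\weakenruleo, \weakenruleb). No other rule can conclude a typing of a value, because the remaining rules all have non-value subjects ($x.a=e$, $x.a$, $\ite{\cdot}{\cdot}{\cdot}$, $\iha{\cdot}{\cdot}{\cdot}{\cdot}$, applications, $\kbreak$, $\klet$, and $\knew$).

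For the base cases the crucial observation is that each value axiom is uniform in the ambient constraint set and carries it unchanged from precondition to postcondition: every such axiom has the shape $\Psi;\Gamma;\Sigma\vdash v:t;\Psi$. Hence $\Psi_1=\Psi_2$, and $\Psi_1\vdc\Psi_2$ follows from reflexivity of $\vdc$ (Proposition~\ref{thm:vdc-transitive}). Moreover, the side conditions of these axioms ($x{:}t\in\Gamma$, $l{:}t\in\Sigma$, well-formedness of the declared function type, or the type of the constant) place no constraint on the ambient $\Psi$ whatsoever, so the same axiom instance re-fires with any $\Psi$ as both pre- and postcondition, giving $\Psi;\Sigma\vdash v:t;\Psi$.

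The inductive cases are the two weakenings. If the derivation ends with \weakenruleo, its premise is $\Psi_1;\Sigma\vdash v:t_i;\Psi_2$ and its conclusion is $\Psi_1;\Sigma\vdash v:t_1\lor t_2;\Psi_2$; the constraint sets are untouched, so $\Psi_1\vdc\Psi_2$ is immediate from the induction hypothesis, and applying \weakenruleo to the retyping $\Psi;\Sigma\vdash v:t_i;\Psi$ supplied by the hypothesis yields $\Psi;\Sigma\vdash v:t_1\lor t_2;\Psi$. If the derivation ends with \weakenruleb, its premise is $\Psi_1;\Sigma\vdash v:t;\Psi_1'$ and its conclusion is $\Psi_1;\Sigma\vdash v:t;\Psi_1'\uplus\Psi_2'$. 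Here the induction hypothesis gives $\Psi_1\vdc\Psi_1'$; combining this with $\Psi_1'\vdc\Psi_1'\uplus\Psi_2'$ (Proposition~\ref{thm:uplus-vdc}) via transitivity of $\vdc$ (Proposition~\ref{thm:vdc-transitive}) gives $\Psi_1\vdc\Psi_1'\uplus\Psi_2'$, and the ``moreover'' claim is delivered verbatim by the hypothesis, since it already produces $\Psi;\Sigma\vdash v:t;\Psi$ for every $\Psi$.

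The only genuine subtlety, and the single case in which pre- and postcondition can fail to coincide, is \weakenruleb, where the gap is closed precisely by Propositions~\ref{thm:uplus-vdc} and~\ref{thm:vdc-transitive} (presumably the reason they were isolated just before this lemma). I would also double-check that constants are indeed endowed with a base axiom of the anticipated form $\Psi;\Gamma;\Sigma\vdash c:t_c;\Psi$, since the excerpt lists the four value forms but displays explicit axioms only for variables, locations, and functions, while a constant axiom is used implicitly elsewhere (cf. Definition~\ref{def:models}).
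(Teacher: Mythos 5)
Your proposal is correct and follows essentially the same route as the paper's own proof: induction on the derivation, with the value axioms handled directly, \weakenruleo by the induction hypothesis, and \weakenruleb closed via Proposition~\ref{thm:uplus-vdc} (you merely make explicit the appeal to transitivity of $\vdc$, which the paper leaves implicit). Your side remark about the missing constant axiom is a fair observation about the paper, not a gap in your argument.
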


\begin{proof}
  By induction on the derivation. The thesis follows directly for axioms and the function definition. For the disjunction rule use the induction hypothesis. For the $\uplus$ rule the thesis follows from Proposition~\ref{thm:uplus-vdc}.

 Since $v$ is a value, no other rules can be used in the derivation.
\end{proof}
\begin{lemma}[weakening]
\label{thm:weakening}
If $\Psi_1;\Gamma;\Sigma \vdash e:t;\Psi_2 $ and 
$\Gamma\subseteq\Gamma',\,\Sigma\subseteq\Sigma'$ then
$$\Psi_1;\Gamma;\Sigma' \vdash e:t;\Psi_2 $$
\end{lemma}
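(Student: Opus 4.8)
The plan is to prove the weakening lemma by a straightforward structural induction on the derivation of $\Psi_1;\Gamma;\Sigma \vdash e:t;\Psi_2$. The statement says that enlarging the type environment $\Gamma$ to some $\Gamma'\supseteq\Gamma$ and the location environment $\Sigma$ to some $\Sigma'\supseteq\Sigma$ does not destroy derivability: the same judgement (with the same constraint sets $\Psi_1,\Psi_2$, type $t$, and expression $e$) still holds under the larger environments. Note that although the lemma statement writes $\Psi_1;\Sigma' \vdash e:t;\Psi_2$, omitting $\Gamma$, the intended conclusion is $\Psi_1;\Gamma';\Sigma' \vdash e:t;\Psi_2$; I would state it that way to match the typing-judgement format $\Psi_1;\Gamma;\Sigma\vdash e:t;\Psi_2$ used throughout the type system.

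The key observation that makes the induction go through is that no typing rule in the system ever \emph{inspects the absence} of a binding in $\Gamma$ or $\Sigma$: the only rules that read these environments, namely \vaccessrule{} and \laccessrule{}, merely require that a particular binding be \emph{present}, and such a binding remains present after enlarging the environment. First I would handle the two axioms that touch the environments. For \vaccessrule{}, from $\Psi;\Gamma,x:t;\Sigma\vdash x:t;\Psi$ we know $x:t\in\Gamma$, hence $x:t\in\Gamma'$ whenever $\Gamma\subseteq\Gamma'$, so the axiom reapplies under $\Gamma',\Sigma'$; the case of \laccessrule{} is identical using $\Sigma\subseteq\Sigma'$. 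The \newrule{} axiom introduces a fresh variable $X$, which can always be chosen fresh for the larger $\Gamma',\Sigma'$ as well, so it is unaffected.

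For the inductive rules the proof is purely mechanical: each premise is a judgement over environments that are enlarged \emph{in exactly the same way} in the conclusion, so the induction hypothesis applies to every premise and the rule is then reapplied verbatim. The only points requiring a moment's care are the rules that extend the environments when typing subexpressions, so that monotonicity of the extension must be checked. In \letrule{} the second premise types $e_0$ under $\Gamma,x:t_1$; since $\Gamma\subseteq\Gamma'$ gives $\Gamma,x:t_1\subseteq\Gamma',x:t_1$, the induction hypothesis applies (after, if desired, renaming $x$ to avoid clash with $\dom{\Gamma'}$). The same remark covers the binder extensions in \condrule{}, \ifhatrule{}, \fdeclrule{}, \labelrule{}, and \breakrule{}, where $\Gamma$ is extended by formal parameters or by a label binding $n:\atype{\emptyset}{}{t}{\Psi_2}$; in each case the extension is compatible with enlarging $\Gamma$ to $\Gamma'$, and the side conditions involving $\vdc$ or $\uplus$ do not depend on the environments at all, so they carry over unchanged. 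The rules \weakenruleb{}, \weakenruleo{}, \fapprule{}, \uoldrule{}, \ufreshrule{}, and \raccrule{} involve no environment extension and follow immediately from the induction hypothesis applied to their premises.

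I do not expect any genuine obstacle here: the lemma is the standard structural-weakening property, and the design of the system, in which environments are only ever read positively and never tested for non-membership, makes it routine. The one place to be mildly careful is variable capture in the binder cases, which is handled by the usual convention that bound variables $x$ (or labels $n$, or formal parameters) may be $\alpha$-renamed to lie outside $\dom{\Gamma'}$ before applying the induction hypothesis.
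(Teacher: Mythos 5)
Your proof is correct and takes essentially the same route as the paper: the paper's own proof consists only of the words ``By routine induction on derivation,'' and your case analysis (environment-reading axioms remain valid under larger environments, binder-extending rules preserve the inclusion, all other rules reapply verbatim) is exactly that routine induction spelled out. Your reading that the conclusion should be $\Psi_1;\Gamma';\Sigma' \vdash e:t;\Psi_2$ is also the sensible repair of the statement as printed.
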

\begin{proof}
  By routine induction on derivation.
\end{proof}

\begin{lemma}[subject reduction]
\label{thm:subject}
Reduction preserves typings, i.e. if
$$\Psi_1;\Gamma;\Sigma \vdash e:t;\Psi_2 $$
$$\sigma;\Sigma \models \Psi_1$$
$$\sigma,\vec{E}\langle e\rangle\leadsto \sigma',\vec{E}\langle e'\rangle $$
then there exist $\Sigma',\,\Psi'$ such that  $\Sigma\subseteq\Sigma'$ and
$$\Psi';\Gamma;\Sigma \vdash e':t;\Psi_2 $$
$$\sigma';\Sigma' \models \Psi'$$
Moreover if $\sigma=\sigma'$ then the thesis holds for $\Sigma'=\Sigma$ and $\Psi'=\Psi$.
\end{lemma}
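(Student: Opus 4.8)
My plan is to proceed by case analysis on the single reduction rule applied in $\sigma,\vec{E}\langle e\rangle\leadsto\sigma',\vec{E}\langle e'\rangle$. Because the context $\vec{E}$ is identical on both sides it is inert, so (exactly as in Lemma~\ref{thm:progress}) the argument reduces to the redex $e$ and its contractum $e'$. The congruence rules that merely relocate the evaluation focus --- (Let-Enter), the (Op-\dots) and ($\beta$-\dots) focus rules, (If-Enter) and (Brk-Enter) --- leave $e$ unchanged up to the focus marker $\langle\cdot\rangle$, which is invisible to the typing rules; for these I take $\Psi'=\Psi_1$, $\Sigma'=\Sigma$ and reuse the given derivation. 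Before attacking the contraction rules I would establish a \emph{substitution lemma} (absent from the excerpt): if $\Psi;\Gamma,x{:}t_1;\Sigma\vd e_0:t;\Psi_3$ and $v$ is a value with $\Psi;\Gamma;\Sigma\vd v:t_1;\Psi$, then $\Psi;\Gamma;\Sigma\vd e_0[x/v]:t;\Psi_3$, proved by induction on the body derivation. I would also record a \emph{value bridge} from typing to satisfaction --- if $\Psi;\Gamma;\Sigma\vd v:t;\Psi$ and $\sigma;\Sigma\models\Psi$ then $\sigma;\Sigma\models v:t$ --- the converse direction of Lemma~\ref{thm:value-completeness}, needed below.

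The store-free contractions come next. For (Deref) the redex $l.f$ is typed by (access) with $X\matches\{a{:}u,\overline{l{:}t}\}\in\Psi_1$ and $\vd u$; since $\sigma;\Sigma\models\Psi_1$, Lemma~\ref{thm:hasfield} gives that $\sigma(l)(f)$ is defined with $\sigma;\Sigma\models\sigma(l)(f):u$, and Lemma~\ref{thm:value-completeness} yields $\Psi_1;\Gamma;\Sigma\vd\sigma(l)(f):u;\Psi_1$, so $\Psi'=\Psi_1$, $\Sigma'=\Sigma$ works. For (Lbl-Pop) the contractum is a value, re-typed by Lemma~\ref{thm:stability}; (Op-Eval) needs the standard assumption that $\delta_n$ returns a value of the operator's type. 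The two store-modifying rules enlarge $\Sigma$, and here the conclusion's judgement is read under $\Sigma'$ (reconciled with the literal $\Sigma$ by Lemma~\ref{thm:weakening}, as $\Sigma\subseteq\Sigma'$). For (New) I set $\Sigma'=\Sigma,l{:}X$ with $X$ the fresh variable of the (new) rule and $\Psi'=\Psi_2$; the location $l$ is typed by (l-access), and $\sigma';\Sigma'\models\Psi'$ follows from the empty-object instance of ($\models$~obj) together with $\sigma;\Sigma\models\Psi_1$, because $l$ is fresh. For (SetRef) I invert (update-old)/(update-fresh), keep $\Sigma'=\Sigma$, and rebuild the updated constraint $X\matches\{a{:}t_2,\dots\}$ for the modified store via ($\models$~obj) and the value bridge, which certifies that the written value carries type $t_2$.

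The conditional and introspection contractions exercise the constraint machinery. For (If-True)/(If-False) I invert (if): the guard is a value, so Lemma~\ref{thm:stability} collapses its pre- and post-constraints, and the selected branch is already typed to the common type $u$ with output $\Psi_2$, while the discarded branch's information is recovered, when needed, by the (weaken-$\lor$) and (weaken-$\uplus$) rules. For (Ifhtr-True)/(Ifhtr-False) I invert (ifhasattr): the taken branch has output $\Psi_2$ (resp.\ $\Psi_2'$), which I weaken to $\Psi_2\uplus\Psi_2'$ by (weaken-$\uplus$), using Proposition~\ref{thm:uplus-vdc} and Lemma~\ref{thm:vdc-sound} to preserve store satisfaction. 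In the true branch I must also justify passing from $\Psi_1$ to $\Psi_1[X\gets\{a\}]$: since the side condition $a\in\dom{\sigma(l)}$ holds, $\sigma;\Sigma\models\Psi_1$ still satisfies the $\bot$-pruned constraint $\Psi_1[X\gets\{a\}]$. The delicate recurring point is matching the exact pre-constraint $\Psi_1$ demanded by the ``moreover'' clause, which forces a small precondition-adjustment argument assembled from Lemma~\ref{thm:stability} and $\vdc$-soundness (Lemma~\ref{thm:vdc-sound}).

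The main obstacle is ($\beta v$). Inverting (fapp) and (fdecl), the body $e$ of $\kfunc(\vec{x}).t\{e\}$ is checked only against the function's own constraints $\Psi_1'/\Psi_2'$, whereas the call site carries the richer $\Psi_{n+1}$ (with $\Psi_{n+1}\vdc\Psi_1'$) and the rule predicts the result constraint $\Psi_{n+1}\updconstr\Psi_2'$. The crux is thus a \emph{framing lemma}: typing of the body is stable under enriching the precondition with constraints on variables the body does not touch, and those extra constraints reappear in the postcondition exactly as $\updconstr$ prescribes. Combining the framing lemma with the substitution lemma (for the parameter substitution $e[x_1/v_1,\dots,x_n/v_n]$) and with $\vdc$-soundness (to transport $\sigma;\Sigma\models\Psi_{n+1}$ down to the function precondition) yields the required judgement, and Lemma~\ref{thm:vdc-updconstr} supplies the matching store fact $\Psi_2'\vdc\Psi_{n+1}\updconstr\Psi_2'$. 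Finally, (Brk-P) does not fit the common-context schema literally, since it collapses $\vec{E}_1\langle n.t\{\vec{E}_2\langle\bullet\rangle\}\rangle$ down to $\vec{E}_1\langle\bullet\rangle$; I would take the labelled instruction $n.t\{\vec{E}_2\langle\breakle{n}{v}\rangle\}$ as the redex and prove a separate context-typing lemma stating that an active $\kbreak$ to an unshadowed label $n$ forces its value to carry the label's declared result type $t$ and to satisfy its declared constraint $\Psi_2$. I expect the framing lemma for function bodies, and secondarily this context-typing lemma for breaks, to be the hardest ingredients; the remaining cases become routine once substitution and framing are available.
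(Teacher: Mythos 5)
Your proposal is correct in substance and draws on the same toolkit as the paper (Lemmas~\ref{thm:hasfield}, \ref{thm:value-completeness}, \ref{thm:stability}, \ref{thm:vdc-sound}, \ref{thm:vdc-updconstr}, \ref{thm:weakening}), but it is organized differently and is considerably more thorough. The paper argues by induction on the typing derivation, which disposes of the structural rules (weaken-$\lor$, weaken-$\uplus$) once and for all in a single case and of reductions inside subterms (your ``congruence'' cases) by the induction hypothesis; it then works out only four representative redexes, namely $\knew$, $l.a$, $\kif$, and ($\beta v$) with $n=1$, omitting (SetRef), (Let), the break rules, (Ifhtr-True/False) and (Op-Eval), all of which you cover. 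Your organization by reduction rule is equally viable, but then you must say how inversion works: the typing derivation of a redex may end with an arbitrary stack of weaken-$\lor$/weaken-$\uplus$ applications above the syntax-directed rule, so you need either an inversion-modulo-structural-rules lemma or the paper's outer induction wrapped around your case analysis. Two of your auxiliary lemmas correspond to steps the paper does inline or elides: your substitution lemma is exactly the paper's replacement, justified by Lemma~\ref{thm:stability}, of each variable axiom for $x$ by a derivation for $v$ inside the body's derivation; and your ``framing lemma'' for ($\beta v$) is precisely what the paper's final ``hence'' silently assumes when it passes from $\Psi_1';\Gamma,x{:}t_1;\Sigma\vd e_b[x/v]:u;\Psi_2'$ together with $\Psi_1\vdc\Psi_1'$ and $\Psi_2'\vdc\Psi_2\updconstr\Psi_2'$ to $\Psi_1;\Gamma,x{:}t_1;\Sigma\vd e_b[x/v]:u;\Psi_2\updconstr\Psi_2'$\,---\,the system has no consequence rule along $\vdc$, so this step needs exactly the lemma you isolate, and the same issue resurfaces in your precondition-adjustment remark about the ``moreover'' clause. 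In short, your plan is sound, identifies the genuinely hard points more explicitly than the paper's own sketch, and where the two overlap the arguments coincide; what your route buys is coverage of the store-mutating and non-local-control cases, at the price of having to state and prove the inversion and framing lemmas separately.
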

\begin{proof}
By induction on the derivation.  We now consider several cases depending on the last rule used

\paragraph{Case $e=\knew$:}
$$\Psi;\Gamma;\Sigma\vd \knew : X; X\matches\{\},\Psi$$
If  $\sigma;\Sigma\models \Psi_1$ then  
$$\sigma(l/\{\});\Sigma(l:X)\models X\matches\{\},\Psi$$

\paragraph{Case $e=l.a$:}
\begin{displaymath}
  \infer%
    {
      \Psi;\Gamma;l:X,\Sigma   \vd   l.a : u;\Psi
    }
    {
      \vd u
    &
      \Psi\ni X\matches\{a:u,\overline{f:t}\}
    }
\end{displaymath}
We have $\sigma,\econtext{l.x}\leadsto\sigma,\sigma(l)(a)$.
Since $\sigma;l:X,\Sigma\models\Psi$ then by Lemma~\ref{thm:hasfield}, $\sigma(l)$ has field $a$ of type $u$:
$$\sigma;\Sigma\models\sigma(l)(a):u$$
Hence by Lemma~\ref{thm:value-completeness}
$$ \Psi;\Gamma;l:X,\Sigma   \vd   \sigma(l)(a) : u;\Psi$$

\paragraph{Case $e=\ite{e_1}{e_2}{e_3}$:}
\par\smallskip
\begin{displaymath}
  \infer%
  {
    \Psi;\Gamma;\Sigma   \vd   \ite{e_1}{e_2}{e_3}
      : u;  \Psi_2 
  }%
  {
    \Psi;\Gamma;\Sigma   \vd   e_1
      : \cbool;\Psi_1
  &
    \Psi_1;\Gamma;\Sigma \vd   e_2
      : u;\Psi_2
  &
    \Psi_1;\Gamma;\Sigma \vd   e_3
      : u;\Psi_2
  }
\end{displaymath}

If $e_1$ is a value then $\sigma,\econtext{e}$ reduces to 
$\sigma,\econtext{e'}$ where $e'$ is either $e_2$ or $e_3$. 
In both cases we have
$$\Psi_1;\Gamma;\Sigma \vdash e':t;\Psi_2 $$
$$\sigma';\Sigma \models \Psi_2$$

Otherwise, $\sigma,\econtext{e_1}\leadsto\sigma',\econtext{e'_1}$. By induction hypothesis, there are
 $\Sigma',\,\Psi'$ such that
$$\Psi';\Gamma;\Sigma \vdash e_1':t;\Psi_2 $$
$$\sigma';\Sigma' \models \Psi'$$
By the typing rule for \textbf{if} and Lemma~\ref{thm:weakening}, we have
$$\Psi';\Gamma;\Sigma'\vd \ite{e_1}{e_2}{e_3} : t ;\Psi_2$$

\paragraph{Case $e=\kfunc(x_1,\cdots, x_n).t\; \{ e_b \}(v_1,\cdots,v_k)$}:
\begin{displaymath}
  \infer%
    {
      \Psi_1;\Gamma;\Sigma   \vd   
\kfunc(\vec{x}).t\; \{ e_b \}(\vec{v})
        : u; \Psi_{n+1}\updconstr\Psi_2'
    }%
    {
      \begin{array}{c}
        \infer{\Psi_1;\Gamma;\Sigma    \vd   \kfunc(\vec{x}).t\; \{ e_b \}
          : \atype{\Psi_1'}{t_1,\ldots, t_n}{u}{\Psi'_2};\Psi_1}
         {
\Psi'_1;\Gamma,\vec{x}:\vec{t};\Sigma\vd e_b:u;\Psi'_2
         }
      \\
        \Psi_{i};\Gamma;\Sigma  \vd   N_i
          : t_i; \Psi_{i+1} \mbox{ for } i=1,\ldots,n
      \qquad
        \Psi_{n+1}\vdc\Psi_1'
      \end{array}
    }
\end{displaymath}

For simplicity consider case when $n=1$:

\begin{displaymath}
  \infer%
    {
      \Psi_1;\Gamma;\Sigma   \vd   
\kfunc(x).t\; \{ e_b \}(v)
        : u; \Psi_{2}\updconstr\Psi_2'
    }%
    {
        \infer{\Psi_1;\Gamma;\Sigma    \vd   \kfunc(x).t\; \{ e_b \}
          : \atype{\Psi_1'}{t_1}{u}{\Psi'_2};\Psi_1}
         {
\Psi'_1;\Gamma,x:t_1;\Sigma\vd e_b:u;\Psi'_2
         }
       &
      \begin{array}{c}
        \Psi_{1};\Gamma;\Sigma  \vd   v_1
          : t_1; \Psi_{2} 
      \\
        \Psi_{2}\vdc\Psi_1'
      \end{array}
    }
\end{displaymath}

We have 
$$ \sigma,\econtext{\kfunc(x).t\; \{ e_b \}(v)} \leadsto \sigma,\econtext{(e_b[x/v])} $$
Consider an occurrence of a variable axiom for $x$ in the type derivation for $e_b$:
$$\Psi_x;\Gamma'\vd x : t_1;\Psi_x$$
By the stability lemma (\ref{thm:stability}), $\Psi_x;\Gamma'\vd v : t_1;\Psi_x$. Hence by replacing all such occurrences by appropriate type derivations for $v$, we get
$$\Psi'_1;\Gamma,x:t_1;\Sigma\vd e_b[x/v]:u;\Psi'_2$$

By the Lemma~\ref{thm:stability} $\Psi_1\vdc\Psi_2$, and $\Psi_2\vdc\Psi_1'$ is the side condition of the application. Hence by transitivity of $\vdc$ we have $\Psi_1\vdc\Psi_1'$ .

By Lemma~\ref{thm:vdc-updconstr}, we have $\Psi_2'\vdc \Psi_{2}\updconstr\Psi_2'$, hence
$$\Psi_1;\Gamma,x:t_1;\Sigma\vd e_b[x/v]:u;\Psi_{2}\updconstr\Psi'_2$$.
Which was to be proved.

\paragraph{Structural rules ($\uplus,\lor$)}
If the last rule used is
\begin{displaymath}
 \infer%
    {
       \Psi_1;\Gamma;\Sigma   \vd  e: t_1\lor t_2;\Psi_2
     }
    {
       \Psi_1;\Gamma;\Sigma   \vd  e: t_i;\Psi_2
    }
\end{displaymath}
then by the induction hypothesis the thesis holds for $t_i$, hence by reapplying the disjunction rule also for $t_1\lor t_2$.

If the last rule used is
\begin{displaymath}
\infer{  \Psi;\Gamma;\Sigma\vd e:t;\Psi_2\uplus\Psi_3}%
{\Psi;\Gamma;\Sigma\vd e:t;\Psi_2}
\end{displaymath}
Then by the induction hypothesis 
$$\Psi';\Gamma;\Sigma \vdash e':t;\Psi_2 $$
$$\sigma';\Sigma' \models \Psi'$$
\end{proof}
And we can get the thesis by applying the $\uplus$ rule.

\begin{theorem}[soundness]
If
  $\Psi_1;\Gamma;\Sigma \vdash |\vec{E}\langle e\rangle|:t;\Psi_2 $
and
$$\sigma;\Sigma\models \Psi_1$$

then either $\sigma, \vec{E}\langle e\rangle$ has an infinite reduction path, or it reduces to a value $v$, i.e. there exist $\sigma',\Sigma'$ and value $v$
such that
  $$\sigma, \vec{E}\langle e\rangle\leadsto^{*} \sigma', v$$
and 
$$\sigma';\Sigma'\models v:t,\Psi_2.$$
\end{theorem}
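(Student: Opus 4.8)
The plan is to obtain the theorem from the Progress lemma (Lemma~\ref{thm:progress}) and the subject reduction lemma (Lemma~\ref{thm:subject}) by the standard iteration argument. First I would fix the invariant to be maintained along the reduction of the whole program: a configuration $\sigma_i,\vec{E}\langle e_i\rangle$ together with a constraint set $\Psi^{(i)}$ and a location environment $\Sigma^{(i)}\supseteq\Sigma$ such that
$$\Psi^{(i)};\Gamma;\Sigma^{(i)}\vdash e_i:t;\Psi_2 \qquad\text{and}\qquad \sigma_i;\Sigma^{(i)}\models\Psi^{(i)}.$$
The hypotheses of the theorem supply the base case $i=0$ with $\Psi^{(0)}=\Psi_1$ and $\Sigma^{(0)}=\Sigma$.

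For the inductive step, suppose $e_i$ is not a value. By Progress the configuration reduces, $\sigma_i,\vec{E}\langle e_i\rangle\leadsto\sigma_{i+1},\vec{E}\langle e_{i+1}\rangle$, and by subject reduction there are $\Sigma^{(i+1)}\supseteq\Sigma^{(i)}$ and $\Psi^{(i+1)}$ restoring the invariant for $i+1$; the weakening lemma (Lemma~\ref{thm:weakening}) keeps the typings valid as the location environment grows, so the chain of invariants composes across arbitrarily many steps. The dichotomy of the theorem is then immediate: either the reduction from $\sigma,\vec{E}\langle e\rangle$ admits an infinite path, giving the first alternative, or every maximal reduction sequence is finite and ends at some $\sigma',e^{*}$. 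In the latter case $e^{*}$ is irreducible, so by the contrapositive of Progress it must be a value $v$, and the invariant yields $\Psi';\Gamma;\Sigma'\vdash v:t;\Psi_2$ together with $\sigma';\Sigma'\models\Psi'$.

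It remains to extract the semantic judgement $\sigma';\Sigma'\models v:t,\Psi_2$, which I would split into its two parts. For the constraint part, value stability (Lemma~\ref{thm:stability}) gives $\Psi'\vdc\Psi_2$, so soundness of $\vdc$ (Lemma~\ref{thm:vdc-sound}) applied to $\sigma';\Sigma'\models\Psi'$ yields $\sigma';\Sigma'\models\Psi_2$. For the value part $\sigma';\Sigma'\models v:t$ I would invoke a converse of value completeness: reasoning by cases on the shape of $v$ over the value-typing rules (v-access, l-access, the constant axiom, fdecl, and the structural disjunction and $\uplus$ rules), a value with $\Psi';\Gamma;\Sigma'\vdash v:t;\Psi'$ and $\sigma';\Sigma'\models\Psi'$ satisfies $\sigma';\Sigma'\models v:t$, using $(\models\lor)$ to close the disjunction case.

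I expect the main obstacle to be precisely this last step, as it is the converse direction of Lemma~\ref{thm:value-completeness} and is not isolated as a lemma in the development. In particular, the case of a location value $l$ requires relating the entry $l:X\in\Sigma'$ to the object $\sigma'(l)$ through the $(\models obj)$ rule, and the case of a function value requires re-deriving its self-contained typing to apply the $(\models \mathit{fun})$ rule. A secondary technical point is the bookkeeping of the growing $\Sigma^{(i)}$ across the iteration, which is handled uniformly by Lemma~\ref{thm:weakening}.
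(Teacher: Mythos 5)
Your proposal follows essentially the same route as the paper's own proof: iterate Progress (Lemma~\ref{thm:progress}) and subject reduction (Lemma~\ref{thm:subject}) along the reduction sequence, then use value stability (Lemma~\ref{thm:stability}) to pass to the final semantic judgement. The paper's proof is only a terse sketch of exactly this argument, so your more detailed version (including the honest observation that the step $\sigma';\Sigma'\models v:t$ needs a converse of Lemma~\ref{thm:value-completeness}, which the paper silently glosses over) is the same approach, just carried out more carefully.
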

\begin{proof}

  By the progress lemma (\ref{thm:progress}), the reduction can stop only at a value. Required properties of the final value and state follow from lemmas~\ref{thm:subject} (by induction on the lreduction length) and~\ref{thm:stability}.
\end{proof}

\section{Related work}
\label{sec:related-work}

The starting point of our research is the paper by Guha et al \cite{GuhaSK11}
where a type system for scripting languages is presented. The type system
there does not address the types of objects and only infers the typing
information concerning base types (including special type of references to
take into account object in a very simple way). We provide here a type
system which in addition to the one by Guha et al can infer meaningful
typing information for objects. 

In addition, the paper by Guha et al relies on runtime tags which are present
in the semantics of dynamicly typed languages such as JavaScript. The
checks over runtime type tags can be viewed as asserts that check if the
particular value is of an expected type. This primitive has, however, one
disadvantage. Namely, it does not reflect the split in the control flow
that makes possible to use particular type. Still, the scripting languages
have operators that check for the actual running time type of a value (e.g.
{\sf typeof} function and {\sf in} operator in JavaScript, or {\sf hasattr},
{\sf getattr}, and {\sf getattr\_static} in Python). We consider it more
natural to rely on these operators instead of the tagchecks.

Our typing framework works in a fashion such that, in principle, all the
typing information should be present. This scenario is not realistic 
when the expectations of the scripting languages programmers are taken into
account. Each real effort to bring static information into the program
development here must consider what was explicated in \cite{WrigstadEFNV09}
and realised in a scpecific way in {\sf Thorn} scripting language
\cite{BloomFNORSVW09} and led to interesting type architecture
\cite{WrigstadNLOV10a,WrigstadNLOV10b}. 

\section{Conclusions and future work}
\label{sec:conclusions}

The type system presented in this paper gives an expressible framework for
typing objects in dynamic languages such as Python or JavaScript. It follows
the general view of type-and-effect systems where running time program
property must be expressed as a dynamical change. We express this change
by means of Hoare-like triples that describe the structure of relevant
objects before an expression is executed and after the execution.

The type system is presented in a fashion similar to Church-style. The
necessary typing information must be provided by a programmer. We work now
on helpful heuristics that will infer most of the type information to make
the annotational effort of a programmer minimal and on the way a system
with types can be integrated to real program development in scripting languages.

\bibliographystyle{alpha}
\bibliography{refl}

\end{document}
\endinput

$$\infer{}{}
$$